\documentclass[a4,paper,11pt]{article}
\usepackage{booktabs}
\usepackage{epsfig}
\usepackage{graphicx}
\usepackage{rotating}
\usepackage{float}
\usepackage{subfigure}
\usepackage{natbib}
\usepackage{amsmath}
\allowdisplaybreaks
\usepackage{tikz}
\usepackage{multirow}
\headheight\baselineskip\headsep2\baselineskip\textwidth200mm
\advance\textwidth -2in\textheight287mm\advance\textheight-2in
\footskip2\baselineskip
\columnseprule0.1pt \topmargin -0.5in \oddsidemargin0pt
\parskip 0.25\baselineskip
\parindent 0.05in
\usepackage{graphicx}
\usepackage{epsfig}
\usepackage{amsfonts,amssymb,amsmath,lscape}
\usepackage{colortbl}
\usepackage{epsfig}
\usepackage{graphicx}
\usepackage{epstopdf}
\usepackage{amsthm}
\usepackage{lscape,color}

\definecolor{c30}{rgb}{0,0,1}

\newtheorem{Theorem}{Theorem}
\newtheorem{Definition}{Definition}

\newtheorem{Corollary}{Corollary}

\newtheorem{Example}{Example}
\newtheorem{Remark}{Remark}
\begin{document}

\title{\bf\Large \bf  A Unified Approach to Construct Correlation Coefficient Between Random Variables}
\author{
\textbf{ Majid Asadi\footnote{{Department of Statistics, University of Isfahan, Isfahan 81744, Iran \& School of  Mathematics, Institute of Research in Fundamental Sciences (IPM), P.O Box 19395-5746, Tehran, Iran} (E-mail: m.asadi@sci.ui.ac.ir)} \ \ and \ \ \textbf{Somayeh Zarezadeh\footnote{Department of Statistics, Shiraz University, Shiraz 71454, Iran  (E-mail: s.zarezadeh@shirazu.ac.ir) }}
}}
\date{}
\maketitle

\begin{abstract}
 Measuring the correlation (association)  between two random variables is one of the important goals in  statistical
 applications. In the literature, the covariance between two random variables  is a widely used criterion in measuring
  the linear association  between two random variables. In this paper,  first we propose a covariance based unified
  measure of variability for a  continuous random variable $X$ and we show that several measures of variability and
  uncertainty, such as variance, Gini mean difference, cumulative residual entropy, etc.,  can be considered as special
  cases.  Then, we propose a unified measure of correlation between two continuous random variables $X$ and $Y$, with
  distribution functions (DFs) $F$ and $G$,  based on the covariance between  $X$ and $H^{-1}G(Y)$ (known as the
  {\it Q-transformation} of $H$ on $G$)  where  $H$ is a continuous  DF.  We show that our proposed measure of   association subsumes some of the existing measures of correlation. {Under some mild condition on $H$}, it is shown the suggested index ranges  between  $[-1,1]$  where the extremes of the range, i.e., -1 and 1,  are  attainable by the    Fr$\acute{\rm e}$chet bivariate minimal and maximal DFs, respectively.
 { A special case of the proposed correlation measure leads to a variant of Pearson correlation coefficient which,
  as a  measure of strength and direction of the linear relationship between $X$ and $Y$, has absolute values greater
  than  or equal to the Pearson correlation.}  The results are examined numerically for some well known bivariate DFs.
\end{abstract}
{\bf Keywords:} Association; Correlation coefficient; Gini's mean difference; Cumulative residual entropy; Fr$\acute{\rm e}$chet
bounds, Q-transformation; Bivariate copula.

\maketitle

\section{Introduction}\label{intro}

One of the fundamental issues in statistical theory and applications is to measure  the correlation (association)
between two random phenomena.
The problem of assessing the correlation  between two random variables (r.v.s) has a long history and because of
importance of the subject, several criteria  have been proposed in the statistical literature. Let $X$ and $Y$  be
two continuous  r.v.s  with joint distribution function (DF) $F(x,y)=P(X\leq x,Y\leq y)$,
$(x,y)\in \mathbb{R}^2$, and continuous marginal DFs $F(x)=P(X\leq x)$ and $G(y)=P(Y\leq y)$, respectively.
In parametric framework, the Pearson correlation coefficient,  which is the most commonly used type of correlation
index,  measures the strength and direction of the linear relationship between $X$ and $Y$. The Pearson correlation coefficient,
denoted by $\rho(X,Y)$, is defined as the ratio of the covariance between $X$ and $Y$,  to the product of their standard deviations. That is
\begin{equation}\label{prhop} \rho(X,Y)=\frac{\mathrm{Cov}(X,Y)}{\sigma_{X}\sigma_{Y}}=\frac{E(XY)-E(X)E(Y)}{\sigma_{X}\sigma_{Y}},
\end{equation}
where $\sigma_{X}>0$ $(\sigma_{Y}>0)$ denotes the standard deviation of $X$ $(Y)$.
An application of Cauchy-Schwarz inequality shows that $\rho(X,Y)$  lies in interval $[-1, 1]$.
In nonparametric framework, the widely used measures of association between two r.v.s  are Kendall's coefficient
and Spearman's coefficient. The Spearman correlation coefficient is defined as the Pearson correlation coefficient
 between the ranks of $X$ and $Y$ while the Kendall's coefficient (of concordance)  is expressed with respect to
 the probabilities of the concordant and discordant pairs of observations from $X$ and $Y$. For more information
 in  properties and applications of these indexes of correlation we refer, among others,  to \cite{Samuel et al. (2001),
  Shevlyakov and Oja (2016)} and references therein.
Although these correlation coefficients have  been widely used in many disciplines, there have been also defined
other indexes of associations which are particulary useful in certain areas of applications;  see, for example,
{ \cite{Yin (2004), R3, Nolde (2014), Grothe et al. (2014)}}.
In economic  and financial studies a commonly used measure of association between r.v.s $X$ and $Y$ is defined
based on Gini's mean difference  by \cite{R1}. The Gini's mean difference corresponding to  r.v. $X$,
denoted by $\mathrm{GMD}(X)$ (or alternatively with $\mathrm{GMD}(F)$),  is defined as
\begin{eqnarray}
  \mathrm{GMD}(X)= E(|X_1-X_2|)=2\int F(x)\bar{F}(x)dx,\label{gmd}
\end{eqnarray}
where $X_1$ and $X_2$ are independent r.v.s distributed as $X$ and $\bar{F}(x)=1-F(x)$. The GMD(X)  as  a measure
of  variability, (which is also equal to $4\mathrm{Cov}(X,F(X))$),  shares many properties of the variance of $X$
and  is more informative than the variance for the distributions that are far from normality (see, \cite{R2}).
\cite{R1} defined   the association between $X$ and $Y$  as the covariance between $X$ and $G(Y)$ divided by the
covariance between $X$ and $F(X)$. In other words, they proposed the measure of association between $X$ and $Y$ as
\begin{equation}
\Gamma(X,Y)=\frac{\mathrm{Cov}(X,G(Y))}{\mathrm{Cov}(X,F(X))}.\label{eee1}
\end{equation}
As  for a continuous r.v. $Y$, $G(Y)$ is distributed uniformly on $(0,1)$, the index  $\Gamma(X,Y)$ measures the
association between $X$ and a uniform r.v. on the interval $(0,1)$ which corresponds to the rank of $Y$. The
index $\Gamma(X,Y)$ has the requirements of a correlation coefficient and  is well applied in a series of
research works in economics and finance by Yitzhaki and his coauthors. We refer the readers, for more details
 on applications of $\Gamma(X,Y)$ and its extensions,  to  \cite{R3} and references therein.
 Recently, \cite{Asadi(2017)}  proposed a new measure of association between two continuous r.v.s $X$ and $Y$.
 This measure is defined on the basis of  $\mathrm{Cov}(X, \phi(X))$, where $\phi(x)=\log\frac{F(x)}{\bar{F}(x)},$
 is the log-odds rate associated to $X$. The cited author provides some interpretations of this covariance and showed
 that it arises  naturally as a measure of variability. For instance, it is shown that $\mathrm{Cov}(X, \phi(X))$ can
  be expressed as a function of cumulative residual entropy (a measure of uncertainty defined in \cite{R8}). Then the
    measure of association between r.v.s $X$ and $Y$ is defined as the ratio of the covariance between $X$ and the log-odds
     rate of  $Y$ divided by the covariance between $X$ and the log-odds rate of $X$. If we denote this measure by
     $\alpha(X,Y)$, then
 \begin{eqnarray}
   \alpha(X,Y)=\frac{\mathrm{Cov}(X, \phi_Y(Y))}{\mathrm{Cov}(X, \phi_{X}(X))}.\label{eqq2}
 \end{eqnarray}
  It should be noted that for  a continuous r.v. $X$, $\phi_{X}(X)$ is distributed as standard  Logistic distribution.
  Hence $\alpha(X,Y)$ measures the correlation between $X$ and a
  standard Logistic r.v.,  where the Logistic r.v. is the log-odds transformation of the r.v. $Y$.

 \qquad The aim of the present paper is to give a unified approach to construct
  measures of association between two r.v.s. In this regard,   we assume that $X$ and $Y$ have continuous
  DFs $F$ and $G$, respectively. First we consider the following covariance which we call it the $G$-covariance
   between $X$ and $Y$,
   \begin{equation}
    {\cal C}(X,Y)=\mathrm{Cov}\left(X,G^{-1}F(X)\right),\label{eq1}
  \end{equation}
  where, for $p\in[0,1]$,
  \[ G^{-1}(p)=\inf\{x\in \mathbb{R}: G(x)\geq p\},\]
is the inverse function of DF $G$. The quantity  $G^{-1}F(.)$ is known in the literature with different names.
\cite{Gilchrist(2000)} called it {\it Q-transformation}
(Q-T)  and \cite{Shaw and Buckley(2009)} named  it  {\it sample transmutation maps}. Throughout the paper, we use the
abbreviation Q-T for quantities  of the form $G^{-1}F(.)$.
Note that the covariance in (\ref{eq1}) measures the linear dependency between $X$ and r.v. $G^{-1}F(X)$, where
the latter one is a r.v. distributed as $Y$.  Based on the covariance  (\ref{eq1}), we propose a unified index of
correlation between $X$ and $Y$ which leads to new measures of correlations and subsumes some of the existing measures
such as the Pearson correlation coefficient (in the case that the $X$ and $Y$ are identical) and  Gini correlation coefficient (and its extensions).
Then, we study several properties of our unified index of association.

\qquad
The rest of the paper is organized as follows:  In Section 2, first we give briefly  some backgrounds and
applications
of quantity Q-T which have already presented in the literature. Then, we give the motivations of using the covariance
(\ref{eq1}) by showing that some measures of variability such as variance, GMD (and its extensions) and cumulative
residual entropy can be considered as special cases of (\ref{eq1}).
In Section 3, we propose our unified measure of association between the r.v.s $X$ and $Y$  based on the covariance
between $X$ and  $H^{-1}G(Y)$, where $H$ is a continuous DF. We call this unified correlation as  {\it $H$-transformed correlation} between $X$ and $Y$ and denote it by $\beta_H(X,Y)$. It is  shown that $\beta_H(X,Y)$ has almost all requirements of a correlation index. For example, it is proved that for any continuous  symmetric DF $H$,   $-1\leq \beta_H(X,Y)\leq 1$,
where  $\beta_H(X,Y)=0$ if $X$ and $Y$ are independent. When the joint distribution of
$X$ and $Y$ is bivariate normal with Pearson correlation $\rho(X,Y)=\rho$, we show that  $\beta_H(X,Y)=\rho$,  for
any $H$.  We prove that for the  association index  $\beta_H(X,Y)$  the  lower and upper bounds of the interval
$[-1,1]$ are attainable.  In fact, it is proved  that $\beta_H(X,Y)=-1$ $(+1)$ if $X$ and $Y$ are jointly  distributed
 as   Fr$\acute{\rm e}$chet bivariate minimal (maximal) distribution.  A special case of $\beta_H(X,Y)$, which we call  it {\it $\rho$-transformed  correlation} and denote it by $\rho_t(X,Y)$,  provides a variant of Pearson correlation
 coefficient $\rho(X,Y)$, whose absolute value is always greater than or equal to the absolute value of Pearson
 correlation $\rho(X,Y)$. That is, $\rho_t(X,Y)$ provides a wider range than that of $\rho(X,Y)$ for measuring the
 linear correlation between two r.v.s. The correlation  $\beta_H(X,Y)$ provides, in general,  an asymmetric class of
 correlation measures in terms of $X$ and $Y$. We propose some symmetric versions of that in Section 3.
 The index $\beta_H(X,Y)$  is computed for several bivariate distributions under different special cases for DF $H$.
 In Section 4,  a decomposition formula is given for G-covariance of sum of nonnegative r.v.s  which yields to some
 applications for redundancy systems. The paper is finalized with some concluding remarks in Section 5.

\section{Motivations}
Let $X$ and $Y$ be two continuous r.v.s with joint DF $F(x,y)$, $(x,y)\in \mathbb{R}^2$, and  marginal DFs $F(x)$ and
 $G(y)$, respectively. In developing our results the quantity Q-T, $G^{-1}F(x)$,  plays a central role.
 \cite{Balanda and MacGillivray(1990)} showed that the behavior of Q-T  can be used to assess the Kurtosis of two
 distributions (see, also, \cite{Groeneveld(1998)}). They showed that for symmetric distributions the so called
 {\it spread-spread } function is essentially a function of Q-T.  \cite{Shaw and Buckley(2009)} mentioned that
 among the applications of Q-T  is sampling from exotic distributions, e.g. $t$-Student. Authors have also used
 the plots of sample version of Q-T, in which the empirical distributions are replaced in $G^{-1}(F(x))$, for
 assessing symmetry of the distributions; see \cite{Doksum et al. (1977)} and references therein.
 \cite{Aly and Bleuer(1986)} called the function Q-T as the Q-Q plot and obtained some confidence intervals for that.
 In comparing the probability distributions, the concept of dispersive (variability) ordering  is used to measure
 variability of r.v.s (see, \cite{Shaked and Shanthikumar(2007)}). The  concept of dispersive ordering relies mainly
 on quantity $G^{-1}F(x)$. A DF $F$ is said to be less than a DF $G$  in dispersive ordering if $G^{-1}F(x)-x$
 is nondecreasing in $x$. (The dispersive ordering had  been  already employed  by  \cite{Doksum (1975)} in which
  he used the terminology \lq\lq $F$ is tail-ordered with respect to $G$\rq\rq).   \cite{Zwet(1964)}  used the
  quantity Q-T to compare the skewness of two probability density functions. The DF $G$ is more right-skewed,
  respectively more left-skewed, than
the DF $F$ if $G^{-1}F(x)-x$ is a nondecreasing convex, respectively concave, function (see also,
\cite{Yeo and Johnson(2000)}). In reliability theory, the convexity of the function Q-T is used, in a general
 setting,  to study the aging properties of lifetime r.v.s with support $[0,\infty)$
 (see, \cite{Barlow and Proschan(1981)}). In particular case if $G$ is exponential distribution,
 the convexity of Q-T  is equivalent to the property  that $F$ has increasing failure rate. Also, according to the
 latter cited authors, a lifetime DF $F$ is said to be less than a lifetime  DF $G$ in star-shaped order if
  $\frac{G^{-1}F(x)}{x}$ is increasing in $x$. In special  case that $G$ is exponential the star-shaped  property
  of Q-T  is equivalent to the property  that $F$ has increasing failure rate in average.

In the following, we use Q-T to  define a variant of covariance between  $X$ and $Y$  which we call it $G$-covariance.
Throughout the paper, we assume that all the required expectations exist.
\begin{Definition}
  {\rm Let $X$ and $Y$ be two r.v.s with DFs $F$ and $G$, respectively. The $G$-covariance of  $X$ in  terms of
   DF $G$ is defined as
  \begin{equation}
    {\cal C}(X,Y)=\mathrm{Cov}\left(X,G^{-1}F(X)\right).
  \end{equation}}
\end{Definition}
As $G^{-1}{F(x)}$ is an increasing function of $x$, we clearly have $0\leq \mathrm{Cov}\left(X,G^{-1}F(X)\right)$,
where the equality holds if and only if  $F$ (or G) is degenerate.  With $\sigma_X^2$ and $\sigma_Y^2$ as the
variances of $X$ and $Y$, respectively, using Cauchy-Schwarz inequality, we  have
\begin{eqnarray}
\mathrm{Cov}^{{2}}(X, G^{-1}F(X))&\leq&\mathrm{Var}(X)\mathrm{Var}{ (G^{-1}F(X))}\nonumber\\
                                   &=& \sigma^{2}_{X}\sigma^{2}_{Y}\label{covee}
\end{eqnarray}
 where the equality follows from the fact that $G^{-1}F(X)$ is distributed as $Y$. Hence, we get that
 \begin{eqnarray}
0\leq {\cal C}(X,Y)\leq \sigma_{X}\sigma_{Y}.\label{cove}
\end{eqnarray}
It can be easily shown that, in the right inequality of (\ref{cove}),
we have the equality if and only if  $X$ and $Y$ are distributed identically up to a location.

Note that ${\cal C}(X,Y)$ can be represented as
\begin{eqnarray}
  {\cal C}(X,Y)&=& \mathrm{Cov}(X,G^{-1}F(X))\nonumber\\
  &=& E\big(XG^{-1}F(X)\big)-E\big(G^{-1}F(X)\big)E(X)\nonumber\\
  &=& E\big(XG^{-1}F(X)\big)-E(Y)E(X)\\
  &=&\int xG^{-1}F(x)dF(x)-E(Y)E(X)\nonumber\\
   &=&\int yF^{-1}G(y)dG(y)-E(Y)E(X)\nonumber\\
  &=& \mathrm{Cov}(Y,F^{-1}G(Y))={\cal C}(Y,X).\label{eqe11}
\end{eqnarray}
Also an alternative way to demonstrate  ${\cal C}(X,Y)$ is
 \begin{eqnarray*}
  {\cal C}(X,Y)&=& \int xG^{-1}F(x)dF(x)-\int G^{-1}F(x)dF(x)\int xdF(x)\\
  &=& \int_{0}^{1} F^{-1}(u)G^{-1}(u)du-\int_{0}^{1} G^{-1}(u)du\int_{0}^{1} F^{-1}(u)du\\
  &=& \mathrm{Cov}(F^{-1}(U),G^{-1}(U)),
\end{eqnarray*}
where $U$ is a uniform r.v. distributed on $(0,1)$.

In the following we show that some well known measures of disparity and variability have a covariance representation and  can be considered as special cases of the $G$-covariance ${\cal C}(X,Y)$.
\begin{description}
  \item [(a)] If $G=F$, then we get
  $${\cal C}(X,Y)={\cal C}(X,X)=\mathrm{Cov}(X,F^{-1}(F(X)))=\mathrm{Cov}(X,X)=\mathrm{Var}(X).$$
  In particular if the vector $(X,Y)$ has an exchangeable DF then
  $${\cal C}(X,Y)=\mathrm{Var}(X)=\mathrm{Var}(Y)={\cal C}(Y,X).$$
  \item[(b)] If $G$ is uniform distribution on $(0,1)$, then we get
  $${\cal C}(X,Y)=\mathrm{Cov}(X,F(X))=\frac{1}{4}\mathrm{GMD}(X),$$
  where $\mathrm{GMD}(X)$ is the Gini's mean difference in (\ref{gmd}). The Gini coefficient, which is a widely
  used measure in economical studies, is defined as the $\mathrm{GMD}(X)$  divided by twice the mean of the
  population. It should be also noted that the $\mathrm{GMD}(X)$ can be represented as   the difference between
  the expected values of the maxima and the minima in a sample of two independent and identically distributed
  (i.i.d.) r.v.s $X_1$ and $X_2$. That is
\[
\mathrm{GMD}(X)= 4 \mathrm{Cov}(X, F(X)) = E \left(\max(X_1, X_2) - \min(X_1, X_2)\right);
\]
see, e.g.,  \cite{R3}.

In reliability theory and survival analysis, the mean residual life (MRL)  and mean inactivity time (MIT) are
important concepts to assess  the lifetime and aging  properties  of devices and live organisms. These concepts,
denoted respectively by $m(t)$ and ${\tilde m}(t)$, are defined at any time $t$ as $m(t)=E(X-t|X>t)$, and ${\tilde m}(t)=E(t-X|X<t)$.
Recently, \cite{R4} have shown, in the case that $X$ is a nonnegative r.v.,
$\mathrm{GMD}(X)$ (and hence $4{\cal C}(X,Y)$) can  also be expressed as the sum of expectations of MRL
and MIT of the minimum of random sample of size 2.
\item[(c)] In the case that $G(y)=1-e^{-y}$, $y>0$, the exponential distribution with mean 1, we obtain
  $${\cal C}(X,Y)=\mathrm{Cov}(X,\Lambda(X)),$$
  where $\Lambda(x)=G^{-1}F(X)=-\log\bar{F}(x)$,  in which $\bar{F}(x)=1-F(x)$.
  The function  $\Lambda(x)$, corresponding to a nonnegative r.v.,  is called in reliability theory as the
  cumulative failure rate and  plays a crucial role in the study of aging properties of systems lifetime.  \cite{Asadi(2017)} has shown that the following equality holds for a nonnegative r.v.
  \begin{equation}\label{CRE}
  \mathrm{Cov}(X,\Lambda(X))=-\int_{0}^{\infty}\bar{F}(x)\log\bar{F}(x)dx,
  \end{equation}
  where the right hand side is known, in the literature, as the cumulative residual entropy (CRE) defined by \cite{R8}.
  As an alternative measure of Shannon entropy,  the cited authors  argued that CRE can be considered as a measure of uncertainty.  They  obtained several properties of CRE and  illustrated that this measure is useful in computer vision and image processing. \cite{R16} showed that the CRE is closely related  to the mean residual life, $m(t)$, of a nonnegative r.v. $X$. If fact, it is always true that the CRE can be represented as  $\mathrm{CRE}=E(m(X))$.
  Another interesting  fact that can also be concluded from the discussion here is that the differential Shannon entropy of the equilibrium distribution (ED) corresponding to $F$ has a covariance representation. The density function of ED is given by
\[ f_e(x)= \frac{\bar{F}(x)}{\mu},\]
where $0<\mu<\infty$ is the mean of DF $F$. In a renewal process, the ED arises as the asymptotic distribution of the waiting time until the next renewal and the time since the last renewal at time $t$. Also a delayed renewal process has stationary
increments if and only if the distribution of the actual
remaining life is $f_{e}(x)$. Such process  known in the literature as the stationary
renewal process or equilibrium renewal process; see, \cite{Ross}. If $H(f_e)$ denotes the differential Shannon entropy of $f_e$, then
\begin{eqnarray*}
  H(f_e)&=& -\int_{0}^{\infty}f_{e}(x)\log f_{e}(x)dx\\
  &=& -\int_{0}^{\infty}\frac{\bar{F}(x)}{\mu}\log \frac{\bar{F}(x)}{\mu}dx\\
  &=& \frac{1}{\mu}\mathrm{Cov}(X,\Lambda(X))+\log \mu.
\end{eqnarray*}

Finally,  we should mention in this part, that the concept of generalized cumulative residual entropy (GCRE) which is  introduced by \cite{Psar-Nav} as
\begin{align}
{\cal E}_n(X)= \frac{1}{n!} \int_0^\infty {\bar F}(x) [\Lambda(x)]^n dx. \label{psar}
\end{align}
For $n=1$, we get the CRE of $X$. One can easily verify that, with $G_n(y)=1-e^{-\sqrt[n]{y}}$,  ${\cal E}_n(X)$ has the following covariance representation
\begin{align}
{\cal E}_n(X)=\frac{1}{n!} \mathrm{Cov}\big(X,G^{-1}_{n}F(X)\big)-\frac{1}{(n-1)!} \mathrm{Cov}\big(X,G^{-1}_{n-1}F(X)\big).\label{gre2}
\end{align}

\item [(d)] In the case that $G$ is Logistic with DF $G(y)=\frac{1}{1+e^{-y}}$, $y\in \mathbb{R}$, we obtain
   $${\cal C}(X,Y)=\mathrm{Cov}\left(X,\phi(X)\right),$$
where $\phi(x)=\log\frac{F(x)}{\bar{F}(x)},$ is  the log-odds rate associated to r.v. $X$.
Log-odds rate is considered  in the survival analysis to model the failure process of lifetime data to assess the survival function of observations (see, \cite{R5}).
It is easy to show that
   \begin{eqnarray*}
   {\cal C}(X,Y)&=&\mathrm{Cov}\left(X,\phi(X)\right)\\
      &=& -\int_{0}^{\infty}\bar{F}(x)\log\bar{F}(x)dx-\int_{0}^{\infty}{F}(x)\log{F}(x)dx,
   \end{eqnarray*}
   where the last term on the right hand side is called as the cumulative past entropy. For some discussions and interpretations of ${\cal C}(X,Y)$, presented in this part, see Asadi (2017).
\item[(e)] Let
\begin{align}\label{exten-gini-dis}
  G(y)=\left\{
    \begin{array}{ll}
      1-\Big(\frac{1}{y}\Big)^{\frac{1}{1-\nu}}, & \hbox{$y>1, ~ 0<\nu<1$; {\rm \ \ Pareto distribution,}}\\
      1-(1-y)^\frac{1}{\nu-1}, & \hbox{$0<y<1, ~ \nu>1$; {\rm\ \  Power distribution,} } \\
      0, & \hbox{o.w.}
    \end{array}
  \right.
\end{align}
Then it can be shown, in this case, that
\begin{equation*}
     {\cal C}(X,Y)=\left[I(0< \nu < 1)-I(\nu>1)\right]\mathrm{Cov}(X,\bar{F}^{\nu-1}(X)),
\end{equation*}
where $I(A)$ is an indicator function which is equal to 1 when $x\in A$ and otherwise is equal to zero. Hence, we get the extended Gini, $\mathrm{EGini}_\nu(X)$,  defined as a parametric extension of GMD(X) of the form:
\[
\mathrm{EGini}_\nu(X)=\nu \left[I(\nu>1)-I(0< \nu < 1)\right] {\cal C}(X,Y),
\]
where $\nu$ is a parameter ranges from 0 to infinity and  determines  the relative weight attributed to various portions of probability distribution.  For  $\nu=2$, the extended Gini leads to GMD(X) (up to a constant).  For more interpretations and applications  of  $\mathrm{EGini}_\nu(X)$ in economic studies based on  different values of $\nu$, we refer to  \cite{R3}.

\item[(f)]   The upper and lower record values, in a sequence of i.i.d. r.v.s $X_1, X_2, \dots$, have applications in different areas of applied probability; see, \cite{Arnold et al. (1998)}. Let $X_i$'s have a common continuous DF $F$ with survival function ${\bar F}$. Define a sequence of upper record times $U(n)$, $n = 1, 2, \dots$, as follows
$$
U(n + 1) = \min \{j : j > U(n), X_j > X_{U(n)}\}, \quad n \geq 1,
$$
with $U(1) = 1$. Then, the sequence of upper record values $\{R_n, n\geq 1\}$ is defined
by $R_n= X_{U(n)}$, $n\geq 1$, where $R_1 = X_1$. The survival function of $R_n$ is given by
\[
{\bar F}_n^U(t)={\bar F}(t)\sum_{x=0}^{n-1} \frac{(\Lambda(t))^x}{x!}, \qquad t>0, n=1,2,\dots,
\]
where $\Lambda(t)=-\log{\bar F}(t)$.   If $R_n$ denotes the $n$th upper record value,  then it can be easily shown that,  with $G_n(y)=1-e^{-\sqrt[n]{y}}$, the mean of difference between $R_n$  and $R_1$ has the following covariance representation:
\begin{align*}
 E(R_n-R_1)=&E(R_n-\mu)=\frac{1}{(n-1)!} \mathrm{Cov}\big(X,G_{n-1}^{-1}F(X)\big),\ \ \ n\geq 1,
\end{align*}
where $\mu=E(R_1)=E(X_1)$.

The lower record values in a sequence of i.i.d. r.v.s $X_1, X_2, \dots$ can be defined in a similar manner. The sequence of record times $L(n)$, $n = 1, 2, \dots$, is defined as $L(1) = 1$ and
\[ L(n + 1) = \min \{j : j > L(n), X_j < X_{L(n)}\}, \quad n\geq1. \]
Then the $n$th lower record value is defined by ${\tilde R}_n=X_{L(n)}$. The DF of ${\tilde R}_n$ is given by
\[ F_n^L(t)=F(t)\sum_{x=n}^{\infty} \frac{[{\tilde \Lambda}(t)]^x}{x!}, \quad t > 0,~n = 1, 2,\dots, \]
in which ${\tilde \Lambda}(t)=-\log F(t)$; see, \cite{Arnold et al. (1998)}.

 Let ${\tilde R}_n$ denote the $n$th lower record. Then, it can be shown that
\begin{align*}
 E({\tilde R}_n-{\tilde R}_1)= E({\tilde R}_n-\mu)=\frac{1}{(n-1)!} \mathrm{Cov}\big(X,[{\tilde \Lambda}(X)]^{n-1}\big),\ \ \ n\geq 1,
\end{align*}
 where ${\tilde \Lambda}(t)=-\log F(t)$. Therefore the expectation of the difference between the $n$th upper and lower records has a covariance representation as follows
\begin{align*}
E({R}_n-{\tilde R}_n)=&\frac{1}{(n-1)!} \mathrm{Cov}\big(X,[{\Lambda}(X)]^{n-1}\big)-\frac{1}{(n-1)!} \mathrm{Cov}\big(X,[{\tilde \Lambda}(X)]^{n-1}\big)\\
=& \frac{1}{(n-1)!} \mathrm{Cov}\big(X,K_n^{-1}(F(X))\big),
\end{align*}
where $K_n(x)$ is a DF with inverse $K_n^{-1}(u)=(-\ln(1-u))^n-(-\ln(u))^n$, $0<u<1$.
\end{description}

\section{A Unified Measure of Correlation}
 We define our unified measure of correlation between $X$ and $Y$,  as follows:
 \begin{Definition}\em
 Let $X$ and $Y$ be   two continuous  r.v.s  with joint DF $F(x,y)$, $(x,y)\in \mathbb{R}^2$, and continuous marginal DFs $F(x)$ and $G(y)$, respectively. Let $H$ be a  continuous DF.  Then the  $H$-transformed correlation between $X$ and $Y$, denoted by $\beta_H(X,Y)$,  is defined as
 \begin{eqnarray}
   \beta_H(X,Y)=\frac{\mathrm{Cov}(X, H^{-1}G(Y))}{\mathrm{Cov}(X, H^{-1}F(X))},\label{index}
 \end{eqnarray}
 provided that all expectations exist and  $\mathrm{Cov}(X, H^{-1}F(X))>0$.
 \end{Definition}
 It is trivial that for continuous r.v. $Y$, the r.v. $H^{-1}G(Y)$ is distributed as r.v. $W$, where $W$ has DF $H$. Hence, $\beta_H(X,Y)$ measures the association between $X$ and a function of $Y$ where that function is the transformation $H^{-1}$ over $G(Y)$. The  $H$-transformed correlation between $Y$ and $X$ can be defined similarly as
 \begin{eqnarray*}
   \beta_H(Y,X)=\frac{\mathrm{Cov}(Y, H^{-1}F(X))}{\mathrm{Cov}(Y, H^{-1}G(Y))},
 \end{eqnarray*}
 provided that $\mathrm{Cov}(Y, H^{-1}G(Y))>0$.

 In what follows, we study the properties of $\beta_H(X,Y)$ and show that, {under some mild condition on $H$}, it has  the necessary requirements of a correlation coefficient. Before that, we give the following corollary  showing  that $\beta_H(X,Y)$ subsumes some well known measures of association as special cases.
\begin{Corollary}
  {\rm  The correlation  index  $\beta_H(X,Y)$ in (\ref{index}) gives the following measures of association as special cases:

\begin{description}
  \item [(a)]  { If we assume that $H= G$ then we have
        \begin{eqnarray}
   \beta_H(X,Y)
   &=& \frac{\mathrm{Cov}(X, Y)}{\mathrm{Cov}(X, G^{-1}F(X))}\nonumber\\
    &=& \frac{\mathrm{Cov}(X, Y)}{\mathrm{Cov}^{\frac{1}{2}}(X, G^{-1}F(X))\mathrm{Cov}^{\frac{1}{2}}(Y, F^{-1}G(Y))},\label{rhort}
 \end{eqnarray}
 where the last equality follows from (\ref{eqe11}). In the following,  we call (\ref{rhort}) as
 {\it $\rho$-transformed correlation} between $X$ and $Y$ and denote it by $\rho_t(X,Y)$.
 The measure $\rho_t(X,Y)$ is a correlation index   proportional to the Pearson correlation coefficient
 $\rho(X,Y)$ in (\ref{prhop}).  In fact
 $\rho_t(X,Y)= a \rho(X,Y),$
 where
 \[a=\frac{\sigma_X\sigma_Y}{\mathrm{Cov}^{\frac{1}{2}}(X, G^{-1}F(X))\mathrm{Cov}^{\frac{1}{2}}(Y, F^{-1}G(Y))}.\]
 In particular, if the marginal DFs  $F$ and $G$ are identical, then  $a=1$.
 (Note that, a sufficient condition to have $F=G$ is that the joint DF of $(X,Y)$ to be exchangeable.
  Recall that a random vector $(X,Y)$ is said to have an exchangeable DF if the vectors  $(X,Y)$ and $(Y,X)$ are identically distributed.)
 However, in general case  based on (\ref{covee}),  we   always have
 $$\mathrm{Cov}^{{2}}(X, G^{-1}F(X))=\mathrm{Cov}^{{2}}(Y, F^{-1}G(Y))\leq \sigma^{2}_{X}\sigma^{2}_{Y}.$$
 Hence, we get that $a^2\geq1$. This, in turn,  implies that the  following interesting inequality holds between
 $\rho(X,Y)$ and $\rho_t(X,Y)$:
 \begin{equation}\label{erho}
  { 0\leq |\rho(X,Y)|\leq |\rho_t(X,Y)|.}
  \end{equation}
 {We will show in Theorem \ref{th1q} that when $X$ and $Y$ are positively correlated then $\rho(X,Y)\leq \rho_t(X,Y)\leq 1$, and when $X$ and $Y$ are negatively correlated and $G$ or $F$ is a symmetric DF,  then $-1\leq \rho_{t}(X,Y)\leq \rho(X,Y)$.   These inequalities indicate  that  $\rho_t(X,Y)$, as a measure of the strength and direction of the linear  relationship between two r.v.s, in compare  to the Pearson correlation $\rho(X,Y)$, shows  more intensity  of  correlation  between the two r.v.s.. This may be  due to the fact that in denominator of $\rho(X,Y)$   the normalizing factor  $\sigma_X$ $(\sigma_Y)$ depends only  on the distribution of $F$ $(G)$  while in   denominator of $\rho_t(X,Y)$ the normalizing factor $\mathrm{Cov}^{\frac{1}{2}}(X, G^{-1}F(X))$   $(\mathrm{Cov}^{\frac{1}{2}}(Y, F^{-1}G(Y)))$ depends on  both DFs  $F$ and $G$.
 }}

\item  [(b)]  If $H$ is uniform on interval $(0,1)$, i.e., $H(x)=x$, $0<x<1$,  then  $\beta_H(X,Y)$ reduces to
the Gini correlation in (\ref{eee1}),
         $$\Gamma(X,Y)=\frac{\mathrm{Cov}(X,G(Y))}{\mathrm{Cov}(X,F(X))}.$$

\item [(c)] { If   $H$ is Pareto distribution $(0<\nu<1)$ or power distribution $(\nu>1)$, given in below
         \begin{align*}
  H(x)=\left\{
    \begin{array}{ll}
      1-\Big(\frac{1}{x}\Big)^{\frac{1}{1-\nu}}, & \hbox{$x>1, ~ 0<\nu<1$;}\\
      1-(1-x)^\frac{1}{\nu-1}, & \hbox{$0<x<1, ~ \nu>1$;}\\
    0, & \hbox{o.w.,} \end{array}
  \right.
\end{align*}
we get the extended Gini $(\mathrm{EGini_{\nu}})$ correlation defined as
\[ \Gamma(\nu, X,Y)=\frac{\mathrm{Cov}(X,\bar{G}^{\nu-1}(Y))}{\mathrm{Cov}(X,\bar{F}^{\nu-1}(X))}, \quad \nu>0.\]
Note that for $\nu=2$ we arrive at the Gini correlation.}

\item [(d)] If $H(x)=\frac{1}{1+e^{-x}}$, $x\in \mathbb{R}$, the standard Logistic distribution, then $\beta_H(X,Y)$ becomes the association measure in (\ref{eqq2}), defined  by \cite{Asadi(2017)}, which measures the correlation between $X$  and the log-odds rate of  $Y$.

  \end{description}}
\end{Corollary}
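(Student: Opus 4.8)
The plan is to treat the corollary as four direct specializations of the defining ratio (\ref{index}): for each prescribed choice of the continuous DF $H$, I would compute the inverse $H^{-1}$ in closed form, substitute $H^{-1}G(Y)$ and $H^{-1}F(X)$ into the numerator and denominator of $\beta_H(X,Y)$, and then simplify using the elementary facts that $G^{-1}G(Y)=Y$ almost surely for continuous $Y$ and that adding a constant inside a covariance leaves it unchanged. None of the four parts requires a genuinely new idea beyond these reductions, combined with the identities already established in Section~2.

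For part (a), setting $H=G$ gives $H^{-1}G(Y)=G^{-1}G(Y)=Y$ and $H^{-1}F(X)=G^{-1}F(X)$, so the numerator collapses to $\mathrm{Cov}(X,Y)$ and we recover the first expression in (\ref{rhort}). The second equality follows immediately from (\ref{eqe11}), which asserts $\mathrm{Cov}(X,G^{-1}F(X))=\mathrm{Cov}(Y,F^{-1}G(Y))$; writing the common value as the geometric mean of these two equal covariances puts the denominator in the symmetric form. Multiplying and dividing by $\sigma_X\sigma_Y$ and invoking (\ref{prhop}) then yields $\rho_t(X,Y)=a\,\rho(X,Y)$ with $a$ as stated. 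The bound $a^2\ge 1$ is read off the Cauchy--Schwarz chain (\ref{covee}): since $\mathrm{Cov}^2(X,G^{-1}F(X))=\mathrm{Cov}^2(Y,F^{-1}G(Y))\le\sigma_X^2\sigma_Y^2$, the product of the two denominator covariances is at most $\sigma_X^2\sigma_Y^2$, giving $a^2\ge 1$ and hence (\ref{erho}). For the special case $F=G$ one has $G^{-1}F(x)=x$, so by part (a) of the motivations $\mathrm{Cov}(X,G^{-1}F(X))=\mathrm{Var}(X)=\sigma_X^2$ and symmetrically $\mathrm{Cov}(Y,F^{-1}G(Y))=\sigma_Y^2$, forcing $a=1$. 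Part (b) is the most immediate: for $H(x)=x$ on $(0,1)$ we have $H^{-1}(u)=u$, so $H^{-1}G(Y)=G(Y)$ and $H^{-1}F(X)=F(X)$, which is exactly $\Gamma(X,Y)$ of (\ref{eee1}).

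For part (c) I would invert the two branches of $H$ separately. In the Pareto branch ($0<\nu<1$) solving $u=1-(1/x)^{1/(1-\nu)}$ gives $H^{-1}(u)=(1-u)^{\nu-1}$, so $H^{-1}G(Y)=\bar G^{\nu-1}(Y)$ and $H^{-1}F(X)=\bar F^{\nu-1}(X)$, producing $\Gamma(\nu,X,Y)$ at once. In the power branch ($\nu>1$) solving $u=1-(1-x)^{1/(\nu-1)}$ gives $H^{-1}(u)=1-(1-u)^{\nu-1}$, so that $H^{-1}G(Y)=1-\bar G^{\nu-1}(Y)$ and $H^{-1}F(X)=1-\bar F^{\nu-1}(X)$; here the additive constant drops out of each covariance and the resulting sign cancels between numerator and denominator, again delivering $\Gamma(\nu,X,Y)$. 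Finally, for part (d) inverting the standard logistic $H(x)=1/(1+e^{-x})$ gives the logit $H^{-1}(u)=\log\{u/(1-u)\}$, whence $H^{-1}G(Y)=\log\{G(Y)/\bar G(Y)\}=\phi_Y(Y)$ and $H^{-1}F(X)=\phi_X(X)$, which is precisely $\alpha(X,Y)$ of (\ref{eqq2}).

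I expect the only points demanding care to be the sign bookkeeping in the power branch of part (c) --- where one must verify that the additive constant vanishes and the overall factor $-1$ really cancels between the two covariances rather than reappearing --- and, in part (a), the correct appeal to (\ref{eqe11}) so that the single denominator covariance is legitimately rewritten as the geometric mean of two provably equal quantities. Everything else reduces to elementary inversion of the stated distribution functions, so no substantive obstacle arises.
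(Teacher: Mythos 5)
Your proposal is correct and follows essentially the same route as the paper: each part is a direct substitution of the closed-form inverse $H^{-1}$ into (\ref{index}), with the symmetric denominator in part (a) and the bound $a^2\geq 1$ obtained, exactly as in the paper, from the identity (\ref{eqe11}) and the Cauchy--Schwarz chain (\ref{covee}). Your inversions in part (c), including the additive-constant and sign cancellation in the power branch, agree with the paper's own computation of $G^{-1}F$ for the Pareto/power case in part (e) of Section 2, so no gap remains.
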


\qquad Before giving the main properties of the correlation in (\ref{index}), we give the  following expressions which indicate that the correlation coefficient $\beta_H(X,Y)$ has representations in terms of joint DF $F(x,y)=P(X\leq x,Y\leq y)$  and joint survival function $\bar{F}(x,y)=P(X>x,Y>y)$. In the sequel, we assume that all  the  integrals  are  from $-\infty$ to $\infty$ unless  stated  otherwise. The correlation $\beta_{H}(X,Y)$ can be expressed as
     \begin{align*}
    \beta_H(X,Y)=&\frac{1}{\mathrm{Cov}(X,H^{-1}F(X))}\int\int\left({F}(x,y)-{F}(x){G}(y)\right)dxdH^{-1}G(y)\\
     =&\frac{1}{\mathrm{Cov}(X,H^{-1}F(X))}\int\int\left(\bar{F}(x,y)-\bar{F}(x)\bar{G}(y)\right)dxdH^{-1}G(y).
     \end{align*}

 The validity of these expressions  can be verified  from Theorem 1 of \cite{R9} under the  assumptions that the expectations exist and $H^{-1}G(y)$ is a bounded variation function.

The following theorem gives some properties of ${\beta}_H(X,Y)$.
\begin{Theorem}\label{th1q}
 The correlation ${\beta}_H(X,Y)$ satisfies in the following properties:
\begin{description}
  \item [(a)]  { For  continuous r.v.s $X$ and $Y$, ${\beta}_H(X,Y)\leq 1$ and when $H$ is a symmetric DF, $-1\leq {\beta}_H(X,Y)\leq 1$.}
    \item [(b)] The maximum (minimum) value of ${\beta}_H(X,Y)$  is achieved, if $Y$ is a monotone increasing (decreasing) function of $X$.
      \item[(c)] For independent r.v.s $X$ and $Y$, ${\beta}_H(X,Y)={\beta}_H(Y,X)=0$.
      \item[(d)] ${\beta}_H(X,Y)=-{\beta}_H(-X,Y)=-{\beta}_H(X,-Y)={\beta}_H(-X,-Y).$
      \item[(e)] The correlation  measure ${\beta}_H(X,Y)$ is invariant  under all strictly monotone functions of $Y$.
      \item[(f)]  ${\beta}_H(X,Y)$ is invariant under changing the location and scale of $X$ and $Y$.
      \item [(g)] If the joint DF of $X$ and $Y$ is exchangeable, then  ${\beta}_H(X,Y)={\beta}_H(Y,X)$.
\end{description}
\end{Theorem}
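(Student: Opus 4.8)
The plan is to prove parts (a) and (b) together from Hoeffding's covariance identity combined with the Fr\'echet--Hoeffding bounds, and then to obtain (c)--(g) by direct manipulation of the ratio in (\ref{index}), using throughout the fact that both $H^{-1}G(Y)$ and $H^{-1}F(X)$ are distributed as $H$. For (a) I would write $W=H^{-1}G(Y)$, which has DF $H$, and start from the Hoeffding representation displayed just before the theorem (valid by Theorem~1 of \cite{R9}): the numerator equals $\mathrm{Cov}(X,W)=\int\int\big(F_{X,W}(x,w)-F(x)H(w)\big)\,dx\,dw$. Since the pair $(X,W)$ has the fixed marginals $F$ and $H$, the Fr\'echet--Hoeffding upper bound $F_{X,W}(x,w)\le\min(F(x),H(w))$ gives $\mathrm{Cov}(X,W)\le\int\int\big(\min(F(x),H(w))-F(x)H(w)\big)\,dx\,dw$. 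The crucial observation is that this upper bound is attained by the comonotone coupling $W=H^{-1}F(X)$, so its right-hand side is \emph{exactly} the denominator $\mathrm{Cov}(X,H^{-1}F(X))$; dividing (the denominator being positive because $H^{-1}F$ is increasing and nondegenerate) yields $\beta_H(X,Y)\le1$. For the lower bound when $H$ is symmetric about some $c$, I would use the Fr\'echet--Hoeffding lower bound $F_{X,W}(x,w)\ge\max(F(x)+H(w)-1,0)$, attained by the countermonotone coupling $W=H^{-1}\bar F(X)$; symmetry gives $H^{-1}(1-u)=2c-H^{-1}(u)$, hence $\mathrm{Cov}(X,H^{-1}\bar F(X))=-\mathrm{Cov}(X,H^{-1}F(X))$, so the numerator is bounded below by minus the denominator and $\beta_H(X,Y)\ge-1$.

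Part (b) is the equality analysis of (a). If $Y=\psi(X)$ with $\psi$ strictly increasing then $G(\psi(x))=F(x)$, so $H^{-1}G(Y)=H^{-1}F(X)$ and numerator equals denominator, giving the maximum $\beta_H=1$; if $\psi$ is strictly decreasing then $G(\psi(x))=\bar F(x)$, so $H^{-1}G(Y)=H^{-1}\bar F(X)$, and under symmetric $H$ the numerator equals minus the denominator, giving $\beta_H=-1$. Part (c) is immediate: independence of $X$ and $Y$ makes $X$ and $H^{-1}G(Y)$ independent, so the numerator vanishes, and symmetrically for $\beta_H(Y,X)$. For (d) I would first record the marginal DFs of $-X$ and $-Y$, namely $F_{-X}(t)=\bar F(-t)$ and $G_{-Y}(t)=\bar G(-t)$, so that $H^{-1}F_{-X}(-X)=H^{-1}\bar F(X)$ and $H^{-1}G_{-Y}(-Y)=H^{-1}\bar G(Y)$; combining $\mathrm{Cov}(-X,\cdot)=-\mathrm{Cov}(X,\cdot)$ with the symmetry identity $H^{-1}(1-u)=2c-H^{-1}(u)$ then produces all four sign relations (so, like the lower bound in (a), these identities rely on the symmetry of $H$).

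Parts (e)--(g) follow the same template. For (e), if $Y^{*}=\psi(Y)$ with $\psi$ strictly increasing then its DF $G^{*}$ satisfies $G^{*}(Y^{*})=G(Y)$, so the numerator, and hence $\beta_H$, is unchanged; for strictly decreasing $\psi$ one gets $G^{*}(Y^{*})=\bar G(Y)$ and the sign flips under symmetric $H$, consistent with (d). Part (f) is the special case $Y^{*}=aY+b$, $a>0$, of (e) applied to $Y$, while for $X$ one uses $F^{*}(X^{*})=F(X)$ together with bilinearity of covariance to cancel the common factor $a$ from numerator and denominator. For (g), exchangeability forces $F=G$ and makes $(X,Y)$ and $(Y,X)$ identically distributed; the latter gives $\mathrm{Cov}(X,H^{-1}F(Y))=\mathrm{Cov}(Y,H^{-1}F(X))$ and $\mathrm{Cov}(X,H^{-1}F(X))=\mathrm{Cov}(Y,H^{-1}F(Y))$, and substituting $G=F$ into $\beta_H(X,Y)$ and $\beta_H(Y,X)$ shows the two ratios coincide.

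I expect the main obstacle to be part (a): the clean identification of the denominator with the comonotone (maximal) covariance and, in the symmetric case, of the countermonotone covariance with its negative. The technical care lies in justifying that the extremal Fr\'echet--Hoeffding bounds are attained precisely by the comonotone coupling $W=H^{-1}F(X)$ and the countermonotone coupling $W=H^{-1}\bar F(X)$, and that these couplings preserve the prescribed marginal $H$. Once this is secured, every sign-and-ratio identity in (c)--(g) reduces to routine covariance bookkeeping built on the single reduction $H^{-1}G(Y)\sim H$ and the symmetry relation $H^{-1}(1-u)=2c-H^{-1}(u)$.
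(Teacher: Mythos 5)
Your proposal is correct, and for the parts the paper actually proves it lands on the same underlying mechanism, but it is more self-contained and more complete. The paper proves only (a) and (g), dismissing (b)--(f) as straightforward with a pointer to Yitzhaki and Schechtman's book; for (a) it simply cites that reference for the fact that $E(XH^{-1}G(Y))$ is maximized (minimized) when $H^{-1}G(Y)$ is an increasing (decreasing) function of $X$, and then uses the symmetry identity $H^{-1}(1-u)=2a-H^{-1}(u)$ exactly as you do. You instead derive that extremal-rearrangement fact from first principles, via Hoeffding's covariance representation plus the Fr\'echet--Hoeffding bounds, identifying the denominator with the comonotone coupling $W=H^{-1}F(X)$ and the countermonotone bound with $W=H^{-1}\bar F(X)$; this is in fact the same technique the paper deploys later in its Theorem \ref{Fer} on attainability of $\pm1$, so your route buys a rigorous, citation-free proof of (a) at essentially no extra cost, and your proofs of (b)--(f) fill genuine gaps the paper leaves open. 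Your explicit observation that the sign relations in (d) (and the sign flip under decreasing transformations in (e)) require the symmetry of $H$ is a sharpening the paper's statement silently omits: without symmetry, the denominator $\mathrm{Cov}(-X,H^{-1}\bar F_{-X}(-X))$ need not equal $\mathrm{Cov}(X,H^{-1}F(X))$, so (d) can fail for asymmetric $H$ such as the exponential (consistent with the paper's own Remark 1). One quibble: in (b), the \emph{minimum} of the numerator is attained by the countermonotone coupling for arbitrary continuous $H$ (directly from the Fr\'echet lower bound in Hoeffding's identity); symmetry of $H$ is needed only to evaluate that minimum value as $-1$, so your phrasing slightly overstates the hypothesis needed for attainment as opposed to evaluation.
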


\begin{proof}
We provide the proofs for parts (a) and (g).  The proofs of other parts are straightforward (see, \cite{R3}, p. 41, where the authors study the properties of Gini correlation $\Gamma(X,Y)$).
\begin{description}
\item[(a)] {First, we show that ${\beta}_H(X,Y)\leq 1$ for any continuous DF $H$. To this,  We need to show that  $E(XH^{-1}G(Y))\leq E(XH^{-1}F(X)).$
Both functions $X$ and $H^{-1}F(X)$ are increasing functions. Then $E(XH^{-1}G(Y))$ achieves its maximum value
when $H^{-1}G(Y)$ is an increasing function of $X$, (see \cite{R3}, p. 41).
This implies that  $H^{-1}F(X)=H^{-1}G(Y)$ which, in turn,  implies that the maximum value is achieved at
$E(XH^{-1}F(X))$ and hence  ${\beta}_H(X,Y)\leq 1$.

Now, let $H$ be a symmetric DF about constant $a$. To have $-1\leq {\beta}_H(X,Y)$ it needs to show that
$-\mathrm{Cov}(X,H^{-1}F(X))\leq \mathrm{Cov}(X,H^{-1}G(Y)).$
From  \cite{R3}, p. 41,
$E(XH^{-1}G(Y))$ achieves its minimum value
when $H^{-1}G(Y)$ is a decreasing function of $X$.
This results in $H^{-1}G(Y)=H^{-1}(1-F(X))=2a-H^{-1}(F(X))$ which, in turn, implies that
$2a-E(XH^{-1}F(X))\leq E(XH^{-1}G(Y))$ and hence $-\mathrm{Cov}(X,H^{-1}F(X))\leq \mathrm{Cov}(X,H^{-1}G(Y))$. Hence, we have $-1\leq {\beta}_H(X,Y)$.
}
\item[(g)] As the  random vector $(X,Y)$ has exchangeable distribution,  $(X,Y)$ is identically distributed as
$(Y,X)$ and hence the marginal distributions of $X$ and $Y$ are identical, i.e.,  $F=G$. Hence, we can write
\begin{align*}
{\beta}_H(X,Y)=&~\frac{\mathrm{Cov}(X,H^{-1}G(Y))}{\mathrm{Cov}(X,H^{-1}F(X))}\\
              =&~\frac{\mathrm{Cov}(X,H^{-1}F(Y))}{\mathrm{Cov}(X,H^{-1}G(X))}\\
              =&~\frac{\mathrm{Cov}(Y,H^{-1}F(X))}{\mathrm{Cov}(Y,H^{-1}G(Y))}={\beta}_H(Y,X).
\end{align*}
\end{description}
\end{proof}

The following theorem proves that in bivariate normal distribution, the correlation $\beta_{H}(X,Y)$ is
equal to Pearson correlation $\rho(X,Y)$.

\begin{Theorem}
  Let $X$ and $Y$ have   bivariate normal distribution with Pearson correlation coefficient $\rho(X,Y)=\rho$.
  Then, for any continuous DF $H$ with finite mean $\mu_{H}$,  $${\beta}_H(X,Y)={\beta}_H(Y,X)=\rho.$$
\end{Theorem}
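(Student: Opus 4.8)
The plan is to exploit the location–scale invariance of $\beta_H$ established in Theorem~\ref{th1q}(f) to reduce to the canonical case, and then to carry out the computation through the Hoeffding-type integral representation recorded just before the theorem, combined with Plackett's differentiation identity for the bivariate normal. First I would standardize: since $\beta_H(X,Y)$ is unchanged under affine transformations of $X$ and $Y$, I may assume that $X$ and $Y$ are standard normal with correlation $\rho$, so that $F=G=\Phi$ and the joint DF is $\Phi_2(x,y;\rho)$. Writing $g:=H^{-1}\Phi$ (a nondecreasing, hence bounded-variation, function for which $g(X)$ and $g(Y)$ are both distributed as $H$), the target reduces to the single identity $\mathrm{Cov}(X,g(Y))=\rho\,\mathrm{Cov}(X,g(X))$, since then $\beta_H(X,Y)=\rho$ and, by the identical computation with the roles of the two standard normal coordinates interchanged, $\beta_H(Y,X)=\rho$ as well.

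For the numerator I would start from the representation $\mathrm{Cov}(X,g(Y))=\int\int\big(\Phi_2(x,y;\rho)-\Phi(x)\Phi(y)\big)\,dx\,dg(y)$ and insert Plackett's identity $\partial_r\Phi_2(x,y;r)=\phi_2(x,y;r)$, which gives $\Phi_2(x,y;\rho)-\Phi(x)\Phi(y)=\int_0^\rho\phi_2(x,y;r)\,dr$ because $\Phi_2(x,y;0)=\Phi(x)\Phi(y)$. Interchanging the order of integration and performing the inner $x$-integration against the marginal, $\int\phi_2(x,y;r)\,dx=\phi(y)$ for every $r$, collapses the $r$-integral to $\int_0^\rho\phi(y)\,dr=\rho\,\phi(y)$. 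Hence the $x$-integral equals $\rho\,\phi(y)$ and $\mathrm{Cov}(X,g(Y))=\rho\int\phi(y)\,dg(y)$.

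It then remains to recognize the remaining integral as the denominator. A Stieltjes integration by parts, using $\phi'(y)=-y\,\phi(y)$ and the vanishing of the boundary term $\phi(y)g(y)$ at $\pm\infty$, yields $\int\phi(y)\,dg(y)=\int y\,g(y)\,\phi(y)\,dy=E(Xg(X))=\mathrm{Cov}(X,g(X))$, the last equality because $E(X)=0$. Combining the two displays gives exactly $\mathrm{Cov}(X,g(Y))=\rho\,\mathrm{Cov}(X,g(X))$, and dividing (the denominator being positive by hypothesis) completes the proof. As a quicker heuristic check one may instead invoke the bivariate Stein identity $\mathrm{Cov}(X,g(Y))=\rho\,E(g'(Y))$ and $\mathrm{Cov}(X,g(X))=E(g'(X))$ together with $X\stackrel{d}{=}Y$; or expand $g$ in Hermite polynomials $He_k$ and use $E(He_j(X)He_k(Y))=\delta_{jk}\,j!\,\rho^{j}$, so that only the degree-one coefficient survives and the factor $\rho$ appears immediately.

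The step I expect to require the most care is justifying these manipulations under the sole hypothesis that $H$ has a finite mean. The function $g=H^{-1}\Phi$ is only a (possibly discontinuous) generalized inverse, so it need not be differentiable and $H$ need not possess a second moment; this is precisely why I would run the argument through the Hoeffding–Plackett route rather than Stein's lemma or the Hermite expansion, which implicitly demand differentiability or square-integrability. The technical burden is then confined to verifying the Fubini interchange and the differentiation-under-the-integral in Plackett's identity, and to controlling the boundary term $\phi(y)g(y)$ as $y\to\pm\infty$; the finiteness of $\mu_H=E(g(X))$ and the existence of the covariances postulated in the definition of $\beta_H$ should be exactly what is needed to make each of these steps legitimate.
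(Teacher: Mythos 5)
Your proof is correct, but it takes a genuinely different route from the paper's. The paper does not standardize and does not touch the Hoeffding representation here: it conditions on $Y$ and exploits the linear regression property of the bivariate normal, $E(X\mid Y)=\mu_F+\rho\sigma_F(Y-\mu_G)/\sigma_G$, writing $\mathrm{Cov}(X,H^{-1}G(Y))=E_Y\left[(E(X\mid Y)-\mu_F)(H^{-1}G(Y)-\mu_H)\right]=\rho\sigma_F\,\mathrm{Cov}(Z,H^{-1}\Phi(Z))$ after the change of variable $y=G^{-1}\Phi(z)$ and the identity $G^{-1}\Phi(z)=\sigma_G z+\mu_G$; the same computation gives $\mathrm{Cov}(X,H^{-1}F(X))=\sigma_F\,\mathrm{Cov}(Z,H^{-1}\Phi(Z))$, and the ratio is $\rho$. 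That argument is shorter, needs only the tower property (legitimate under the standing integrability assumptions), and handles general marginals directly without invoking part (f) of Theorem \ref{th1q}. Your Hoeffding--Plackett route is heavier but sound: the representation $\mathrm{Cov}(X,g(Y))=\int\int(\Phi_2(x,y;\rho)-\Phi(x)\Phi(y))\,dx\,dg(y)$ is exactly the one the paper records before Theorem \ref{th1q} (monotone $g=H^{-1}\Phi$ is of bounded variation, as Cuadras's theorem requires), Plackett's identity and Tonelli apply since $\phi_2\geq 0$ and $dg$ is a nonnegative measure, and your integration by parts closes as you anticipate: monotonicity of $g$ gives $|g(y)|\phi(y)=|g(y)|\int_y^\infty t\phi(t)\,dt\leq\int_y^\infty t|g(t)|\phi(t)\,dt\to 0$, so the boundary term vanishes precisely because $E|Yg(Y)|<\infty$ is among the postulated expectations. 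What your approach buys that the paper's does not: it makes transparent that $\mathrm{Cov}(X,H^{-1}G(Y))=\rho\int\phi\,dg$ is linear in $\rho$ with a nonnegative coefficient, hence monotone in the correlation parameter, and it isolates exactly where normality enters (through $\partial_r\Phi_2=\phi_2$) rather than through the regression line. Your instinct to avoid Stein's lemma and Hermite expansions was right, since $H^{-1}$ need be neither differentiable nor square-integrable under the finite-mean hypothesis; treat those two alternatives strictly as heuristics, as you did.
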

\begin{proof}
Assume that the marginal DFs of $X$ and $Y$ are $F$ and $G$, with means $\mu_F$ and $\mu_G$ and positive variances
$\sigma^2_{F}$ and $\sigma^2_{G}$, respectively. Further let $Z$ denote the standard normal r.v. with DF $\Phi$.
It is well known that for the bivariate normal distribution we have
$$E(X|Y)=\mu_F +\rho \sigma_F\frac{(Y-\mu_G)}{\sigma_G}.$$
Using this we can write
\begin{align*}
\mathrm{Cov}(X,H^{-1}G(Y))=&~E_Y\left[\left(E(X|Y)-\mu_{F}\right)\left(H^{-1}G(Y)-\mu_{H}\right)\right]\\
=&~ \rho\sigma_{F} E_Y\Big[\big(\frac{Y-\mu_G}{\sigma_G}\big)H^{-1}G(Y)\Big]\\
=&~\rho\sigma_F \int \big(\frac{y-\mu_G}{\sigma_G}\big) H^{-1}G(y) dG(y)\\
=&~\rho\frac{\sigma_F}{\sigma_G} \int \big({G^{-1}\Phi(z)-\mu_G}\big) H^{-1}\Phi(z) d\Phi(z)\\
=& ~ \rho\frac{\sigma_F}{\sigma_G} \big( \int G^{-1}\Phi(z) H^{-1}\Phi(z) d\Phi(z)-\mu_G\mu_H\big)\\
=&~\rho\frac{\sigma_F}{\sigma_G} \mathrm{Cov}(G^{-1}\Phi(Z),H^{-1}\Phi(Z))\\
=&~\rho\sigma_{F} \mathrm{Cov} (Z, H^{-1}\Phi(Z)),
\end{align*}
where the last equality follows from the fact that $G^{-1}\Phi(z)=\sigma_G z+\mu_G$.
On the other hand, we can similarly show that $\mathrm{Cov}(X,H^{-1}F(X))=\sigma_{F}\mathrm{Cov}(Z,H^{-1}\Phi(Z)).$
Hence. we have  ${\beta}_H(X,Y)=\rho.$
\end{proof}

 \qquad Assuming that $X$ and $Y$ have joint  bivariate DF $F(x,y)$,  with marginal DFs $F(x)$ and $G(y)$, then $F(x,y)$ satisfies the  Fr$\acute{\rm e}$chet bounds inequality
 $$F_0(x,y)=\max\{F(x)+G(y)-1,0\}\leq F(x,y)\leq  \min\{F(x),G(y)\}=F_1(x,y).$$
 The  Fr$\acute{\rm e}$chet bounds $F_0(x,y)$ and $F_1(x,y)$ are themselves bivariate distributions known as the minimal and maximal distributions, respectively. These distributions show  the perfect negative and positive  dependence between the corresponding r.v.s   $X$ and $Y$, respectively; in the sense that \lq\lq  the joint distribution of $X$ and $Y$ is $F_0(x,y)$ ($F_1(x,y)$) if and only if  $Y$ is decreasing (increasing) function of $X$\rq\rq\ (see, \cite{R13}).
  In the following theorem we prove that, under some conditions,  the  extremes of the range for ${\beta}_H(X,Y)$
 i.e., $-1$ and $1$,  are  attainable by the   Fr$\acute{\rm e}$chet bivariate minimal and maximal distributions,
 respectively.
 In other words, we show that for  lower and upper  bounds of  Fr$\acute{\rm e}$chet inequality
 we have ${\beta}_H(X,Y)=-1$ and ${\beta}_H(X,Y)=1$, respectively.

 \begin{Theorem}\label{Fer}
        Let $X$ and $Y$ be two continuous r.v.s with DFs ${F}(x)$ and ${G}(y)$, respectively, and $H$ be a continuous DF.
        \begin{itemize}
          \item[(a)]  If $(X,Y)$ has joint DF  $F_1(x,y)$ then ${{\beta}_H}(X,Y)=1$,
          \item[(b)]  If $H$ is symmetric and $(X,Y)$ has joint DF $F_0(x,y)$ then ${{\beta}_H}(X,Y)=-1$.
          \end{itemize}
      \end{Theorem}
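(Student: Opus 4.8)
The plan is to exploit the fact that, under the Fr\'echet extremes, the pair $(F(X),G(Y))$ becomes deterministically linked, so that the numerator $\mathrm{Cov}(X,H^{-1}G(Y))$ collapses to $\pm$ the denominator $\mathrm{Cov}(X,H^{-1}F(X))$. Since the marginals are continuous, $F(X)$ and $G(Y)$ are each uniform on $(0,1)$, and the copula attached to $F_1$ (respectively $F_0$) is the comonotone $M(u,v)=\min(u,v)$ (respectively the countermonotone $W(u,v)=\max(u+v-1,0)$). Consequently, under $F_1$ we have $G(Y)=F(X)$ almost surely, while under $F_0$ we have $G(Y)=1-F(X)$ almost surely. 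Equivalently, as recalled before the theorem, $Y$ is an increasing (decreasing) function of $X$ in the two cases.

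For part (a), I would substitute $G(Y)=F(X)$ directly. Because $G$ is continuous, this gives $H^{-1}G(Y)=H^{-1}F(X)$ almost surely, and plugging into the numerator yields $\mathrm{Cov}(X,H^{-1}G(Y))=\mathrm{Cov}(X,H^{-1}F(X))$, so that $\beta_H(X,Y)=1$, as claimed. This is consistent with Theorem \ref{th1q}(a)--(b), which already guarantees $\beta_H(X,Y)\le 1$ with the maximum attained exactly when $Y$ is an increasing function of $X$.

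For part (b), I would substitute $G(Y)=1-F(X)$, so that $H^{-1}G(Y)=H^{-1}(1-F(X))$. The remaining ingredient is the quantile reflection identity for a symmetric $H$: if $H$ is symmetric about a constant $a$, then $H(a+t)+H(a-t)=1$ for all $t$, which yields $H^{-1}(1-p)=2a-H^{-1}(p)$ for $p\in(0,1)$. Applying this with $p=F(X)$ gives $H^{-1}G(Y)=2a-H^{-1}F(X)$. Since covariance is unaffected by adding a constant and is negated by multiplication by $-1$, the numerator becomes $\mathrm{Cov}(X,2a-H^{-1}F(X))=-\mathrm{Cov}(X,H^{-1}F(X))$, whence $\beta_H(X,Y)=-1$.

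The step I expect to require the most care is the rigorous justification of the almost-sure identities $G(Y)=F(X)$ and $G(Y)=1-F(X)$ when working with the generalized inverses $G^{-1}$ and $H^{-1}$, together with the reflection identity $H^{-1}(1-p)=2a-H^{-1}(p)$ at points where $H$ might be flat or $H^{-1}$ might jump. Under the standing continuity hypotheses on $F$, $G$, and $H$ these subtleties hold on a set of probability one, so the substitutions above are legitimate. The symmetry of $H$ in part (b) is precisely what converts the countermonotone link into a sign reversal, and it is indispensable here, mirroring its role in establishing the lower bound in Theorem \ref{th1q}(a).
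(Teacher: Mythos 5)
Your proof is correct, but it takes a genuinely different route from the paper's. The paper works analytically: it invokes the Hoeffding-type covariance identity $\mathrm{Cov}(X,H^{-1}G(Y))=\int\int\left(F(x,y)-F(x)G(y)\right)dH^{-1}G(y)\,dx$ (justified via Cuadras's theorem, under a bounded-variation assumption on $H^{-1}G$), substitutes the Fr\'echet bound $\min\{F(x),G(y)\}$ (resp.\ $\max\{F(x)+G(y)-1,0\}$) explicitly, splits the inner integral over the regions $\{y\geq G^{-1}F(x)\}$ and its complement, and after integration by parts and a limiting argument arrives at $\int_0^1 F^{-1}(u)H^{-1}(u)\,du-\mu_F\mu_H=\mathrm{Cov}(X,H^{-1}F(X))$, with the symmetry of $H$ entering in part (b) through the same reflection identity $H^{-1}(1-u)=2\mu_H-H^{-1}(u)$ that you use. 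You instead argue structurally: since $F$ and $G$ are continuous, $F(X)$ and $G(Y)$ are uniform, the copulas of $F_1$ and $F_0$ are $M$ and $W$, and hence $G(Y)=F(X)$ a.s.\ (resp.\ $G(Y)=1-F(X)$ a.s.), after which the numerator collapses to $\pm\,\mathrm{Cov}(X,H^{-1}F(X))$ by direct substitution. Your route is shorter and avoids the bounded-variation hypothesis needed for the covariance representation, requiring only existence of the covariances already assumed in the definition of $\beta_H$; you also correctly flag the one real subtlety, namely that the reflection identity $H^{-1}(1-p)=2a-H^{-1}(p)$ can fail at the at-most-countably-many $p$ where $H$ is flat, which is harmless since $F(X)$ is uniform and avoids any countable set almost surely. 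What the paper's computation buys in exchange is the explicit intermediate formula $\mathrm{Cov}(X,H^{-1}G(Y))=\mathrm{Cov}(F^{-1}(U),H^{-1}(U))$ under the Fr\'echet bounds, which exhibits the extremal covariance as a quantile inner product rather than merely identifying it with the denominator.
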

\begin{proof}%
\begin{description}
      \item[(a)] Let us define the sets  $A_x=\{y|y\geq{G}^{-1}({F}(x))\}$ and $A^c_x=\{y|y<{G}^{-1}({F}(x))\}$.
                Then, we have
      \begin{align}
   \mathrm{Cov}(X,H^{-1}G(Y))=&\int\int\left({F}(x,y)-F(x)G(y)\right)dH^{-1}G(y)dx\nonumber\\
    =& \int\int\Big(\min\{{F}(x),{G}(y)\}-F(x)G(y)\Big)dH^{-1}G(y)dx\nonumber\\
      =&\int{F}(x)\int_{A_x}\bar{G}(y)dH^{-1}G(y)dx+\int\bar{F}(x)\int_{A^c_x}{G}(y)dH^{-1}G(y)dx. \label{eqf}
                     \end{align}
         But, we have under the assumptions of the theorem
         \begin{equation}
           \int_{A_x}\bar{G}(y)dH^{-1}G(y)=-\bar{F}(x)H^{-1}F(x)+\int_{F(x)}^{1}H^{-1}(u)du,\label{eqf1}
         \end{equation}
         and
         \begin{equation}
           \int_{A^c_x}{G}(y)dH^{-1}G(y)=F(x)H^{-1}F(x)-\int_{0}^{F(x)}H^{-1}(u)du.\label{eqf2}
             \end{equation}
From (\ref{eqf}), (\ref{eqf1}) and (\ref{eqf2}), we get
             \begin{align}
   \mathrm{Cov}(X,H^{-1}G(Y))=&\lim_{a\rightarrow -\infty}\Big\{\int_{a}^{\infty} F(x)\int_{F(x)}^{1}H^{-1}(u)dudx-\int_{a}^{\infty}\bar{F}(x)\int_{0}^{F(x)}H^{-1}(u)dudx\Big\}\nonumber\\
   =&\lim_{a\rightarrow -\infty}\Big\{\int_{a}^{\infty} F(x)\int_{F(x)}^{1}H^{-1}(u)dudx\nonumber \\
   &-\int_{a}^{\infty}\bar{F}(x)\Big(\int_{0}^{1}H^{-1}(u)du-
   \int_{F(x)}^{1}H^{-1}(u)du\Big)dx\Big\}\nonumber\\
   =&\lim_{a\rightarrow -\infty}\Big\{ \int_{a}^{\infty}\int^{1}_{F(x)}H^{-1}(u)dudx-\int_{a}^{\infty}\bar{F}(x)dx \int_{0}^{1}H^{-1}(u)du\Big\}\nonumber\\
   =& \lim_{a\rightarrow -\infty}\Big\{\int_{0}^{1}H^{-1}(u)\Big(\int_a^{F^{-1}(u)} dx -\int_a^\infty {\bar F}(x)dx\Big)du\Big\}\nonumber\\
   =& \lim_{a\rightarrow -\infty}\Big\{\int_{0}^{1}H^{-1}(u)(F^{-1}(u)-a-\mu_F+a\big)\Big\}\nonumber\\
   =&\int_{0}^{1}F^{-1}(u)H^{-1}(u)du-\mu_F\mu_H\nonumber\\
   =&~\mathrm{Cov}(F^{-1}(U),H^{-1}(U))\nonumber\\
   =&~\mathrm{Cov}(X,H^{-1}(F(X)))\nonumber.
                        \end{align}
                        This shows that $\beta_{H}(X,Y)=1$.
\item[(b)] In this case we define  $B_x=\{y|y\geq{G}^{-1}(\bar{F}(x))\} $ and $B^c_x=\{y|y<{G}^{-1}(\bar{F}(x))\}.$
Then
 \begin{align*}
   \mathrm{Cov}(X,H^{-1}G(Y))=& \int_{a}^{\infty}\int\Big(\max\{{F}(x)+{G}(y)-1,0\}-F(x)G(y)\Big)dH^{-1}G(y)dx\\
      =&-\int_{a}^{\infty}\int_{B_x}\bar{F}(x)\bar{G}(y)dH^{-1}G(y)dx-\int\int_{B^c_x}F(x)G(y)dH^{-1}G(y)dx.
   \end{align*}
Therefore, using the same procedure as part (a), it can be written
\begin{align*}
\mathrm{Cov}(X,H^{-1}G(Y))=&\lim_{a\rightarrow -\infty}\Big\{\int_{a}^{\infty}F(x)\int_{0}^{\bar{F}(x)}H^{-1}(u)dudx-
      \int_{a}^{\infty}\bar{F}(x)\int_{\bar{F}(x)}^{1}H^{-1}(u)dudx\Big\}\\
      =& \lim_{a\rightarrow -\infty}\Big\{ \int_{a}^{\infty}\int_{0}^{\bar{F}(x)}H^{-1}(u)dudx
      -\int_{a}^{\infty}\bar{F}(x) dx \int_{0}^{1}H^{-1}(u)du\Big\}\\
      =& \lim_{a\rightarrow -\infty}\Big\{ \int_0^1 H^{-1}(1-u) \big(F^{-1}(u)-a-\mu_F+a\big)du\\
      \stackrel{c}{=}& \int_0^1 (2\mu_H-H^{-1}(u)) F^{-1}(u) du-\mu_F\mu_H\\
      =& -\int_0^1 H^{-1}(u) F^{-1}(u) du+\mu_F\mu_H\\
      =&-\mathrm{Cov}(H^{-1}(U),F^{-1}(U))\\
      =&-\mathrm{Cov}(X,H^{-1}(F(X))),
\end{align*}
       where the equality ($c$) follows from the assumption that $H$ is symmetric. Hence, we get that $\beta_{H}(X,Y)=-1$. This completes the proof of the theorem.
\end{description}
\end{proof}
\begin{Remark}\em
{ It should be pointed out that,  the symmetric condition imposed on $H$  in part (b) of Theorem
  \ref{Fer} can not be dropped in general case. As a counter example, it can be easily verify  that if $H$ is
   exponential the upper bound 1 for $\beta_{H}(X,Y)$ is attainable by Fr$\acute{\rm e}$chet bivariate maximal
   distribution, however, the lower bound -1 is not attainable by Fr$\acute{\rm e}$chet bivariate minimal
   distribution.}
  \end{Remark}

\qquad A well known class of bivariate distributions, which is extensively studied in the statistical literature,
is FGM family (see, \cite{R10}). The joint DF $F(x,y)$ of the r.v.s  $X$ and $Y$  with, respectively,
continuous marginal DFs $F(x)$ and $G(y)$, is said to be a member of FGM family if
$$F(x,y)=F(x)G(y)\left(1+\gamma\bar{F}(x)\bar{G}(y)\right), $$
where $\gamma\in[-1,1]$ shows  the parameter of dependency between $X$ and $Y$. Clearly for $\gamma=0$, $X$ and $Y$ are
independent. It is well known  that for FGM family
the  Pearson correlation coefficient $\rho(X,Y)$ lies  in interval $[-1/3,1/3]$ where the maximum is attained for the
case when the marginal distributions are uniform (\cite{R11}).
\cite{R12} proved  that in FGM family, the Gini correlation $\Gamma(X,Y)$  lies between $[-{1}/{3},{1}/{3}],$ for any
marginal DFs $F$ and $G$.

The following theorem gives an expression for ${\beta}_H(X,Y)$ in FGM family.
\begin{Theorem}
  Under the assumption that $F$, $G$ and $H$ have finite means, the association measure ${\beta}_H(X,Y)$, for the FGM  class, is given by
  \begin{eqnarray}\label{betaFGM}
 {\beta}_H(X,Y)=\gamma\frac{\mathrm{GMD}(F)\mathrm{GMD}(H)}{4\mathrm{Cov}(X,H^{-1}F(X))}.
  \end{eqnarray}
  \end{Theorem}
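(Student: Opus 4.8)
The plan is to leave the denominator $\mathrm{Cov}(X,H^{-1}F(X))$ untouched---it already appears verbatim in (\ref{betaFGM})---and to evaluate only the numerator $\mathrm{Cov}(X,H^{-1}G(Y))$ for the FGM joint DF. The natural starting point is the Hoeffding-type representation displayed just before Theorem \ref{th1q}, namely
\begin{equation*}
\mathrm{Cov}(X,H^{-1}G(Y))=\int\int\big(F(x,y)-F(x)G(y)\big)\,dx\,dH^{-1}G(y).
\end{equation*}
The key observation that makes the FGM case tractable is that, because the FGM DF perturbs the independence product by a rank-one term, the integrand factorizes completely into an $x$-part and a $y$-part.

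Substituting $F(x,y)=F(x)G(y)\big(1+\gamma\bar F(x)\bar G(y)\big)$ gives $F(x,y)-F(x)G(y)=\gamma\,F(x)\bar F(x)\,G(y)\bar G(y)$, so the double integral splits as a product
\begin{equation*}
\mathrm{Cov}(X,H^{-1}G(Y))=\gamma\Big(\int F(x)\bar F(x)\,dx\Big)\Big(\int G(y)\bar G(y)\,dH^{-1}G(y)\Big).
\end{equation*}
By the GMD representation (\ref{gmd}) the first factor is immediately $\tfrac12\mathrm{GMD}(F)$, so it remains only to identify the second factor with $\tfrac12\mathrm{GMD}(H)$.

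For the second factor I would change variables twice: first $u=G(y)$ turns it into $\int_0^1 u(1-u)\,dH^{-1}(u)$, and then $w=H^{-1}(u)$ (equivalently $u=H(w)$) turns it into $\int H(w)\bar H(w)\,dw$, which by (\ref{gmd}) applied to $H$ equals $\tfrac12\mathrm{GMD}(H)$. Multiplying the two factors yields $\mathrm{Cov}(X,H^{-1}G(Y))=\gamma\,\mathrm{GMD}(F)\mathrm{GMD}(H)/4$, and dividing by $\mathrm{Cov}(X,H^{-1}F(X))$ gives (\ref{betaFGM}). The only delicate point---the step I would write out most carefully---is the legitimacy of the covariance representation and of the Stieltjes changes of variables: one must use that $H^{-1}G(\cdot)$ is of bounded variation and that $F$, $G$, $H$ have finite means (as assumed in the statement) so that Fubini applies and no boundary contributions survive when passing from $u$ to $w$. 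Everything else is a routine factorization, so I expect the proof to be short once that regularity is dispatched.
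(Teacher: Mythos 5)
Your proposal is correct and follows essentially the same route as the paper's own proof: the Hoeffding-type covariance representation, the factorization $F(x,y)-F(x)G(y)=\gamma F(x)\bar{F}(x)G(y)\bar{G}(y)$, and the identification $\int G(y)\bar{G}(y)\,dH^{-1}G(y)=\int H(w)\bar{H}(w)\,dw=\tfrac{1}{2}\mathrm{GMD}(H)$, with the only cosmetic difference being that the paper carries out your two substitutions in a single step $u=H^{-1}G(y)$. Your explicit remark on the bounded-variation and finite-mean hypotheses is sound and matches how the paper justifies the covariance representation via the result of Cuadras cited just before Theorem \ref{th1q}.
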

\begin{proof}
\begin{align*}
    \mathrm{Cov}(X,H^{-1}G(Y))=& \int\int\left({F}(x,y)-{F}(x){G}(y)\right)dH^{-1}G(y)dx\\
    =&~\gamma\int\int F(x)\bar{F}(x)G(y)\bar{G}(y)dH^{-1}G(y)dx\\
    =&~\gamma\int F(x)\bar{F}(x)dx \int G(y)\bar{G}(y)dH^{-1}G(y)\\
    =&~\gamma\int F(x)\bar{F}(x)dx\int H(u)\bar{H}(u))du\\
    =&\frac{\gamma}{4}\mathrm{GMD}(F) \mathrm{GMD}(H),
\end{align*}
   where $\bar{H}=1-H$. Hence,  ${\beta}_H (X,Y)$  can be represented as
\begin{eqnarray*}
  {\beta}_H(X,Y)=\gamma\frac{\mathrm{GMD}(F) \mathrm{GMD}(H)}{4\mathrm{Cov}(X,H^{-1}F(X))}.
 \end{eqnarray*}
This completes the proof.
\end{proof}

{It should be pointed out  that the correlation index  ${\beta}_H(X,Y)$ in FGM family does not depend on the DF $G$
which is transmuted  by $H$. Also, it is trivial that, in the case where $H$ is uniform DF on interval $(0,1)$,
${\beta}_H(X,Y)$ reduces to Gini correlation which is free of $F$ and its values lies in $[-1/3,1/3]$.
If $H=G$,   we arrive at the following formula for  $\rho_t(X,Y)$:
\begin{eqnarray*}
  {\rho}_t(X,Y) &=&\frac{\gamma}{4}\frac{\mathrm{GMD}(F)}{\mathrm{Cov}^{1/2}(X,G^{-1}F(X))} \frac{\mathrm{GMD}(G)}{\mathrm{Cov}^{1/2}(Y,F^{-1}G(Y))}.
 \end{eqnarray*}
Table \ref{tabrhotfgm} gives the range of possible values of $\rho(X,Y)$ and  $\rho_{t}(X,Y)$, in FGM family,
 for different choices of  DFs $F$ and $G$.
When one of the two r.v.s  is selected as uniform r.v. $U$, then we get the Gini correlation and hence
\[
\rho_t(U,X)=\rho_t(X,U)=\frac{\mathrm{GMD}(X)\mathrm{GMD}(U)}{4\mathrm{Cov}(X,F(X))}=\frac{\gamma}{3}.
\]
This implies that the range of possible values of $\rho_t(X,U)$ is $[-1/3,1/3]$.
As seen in the table, $\rho_t(X,Y)$, in compare  to the Pearson correlation $\rho(X,Y)$, shows a wider range
of correlation  between the two r.v.s.

\begin{table}[!h]
\centering
\caption{\small The ranges of $\rho$, and $\rho_t$ correlations for some distributions in FGM family.}\label{tabrhotfgm}
\small
\begin{tabular}{lccccccc}
   \toprule
    & & \multicolumn{5}{c}{DF of $Y$} \\ \cline{3-7}
     DF of $X$ & Index &
    Uniform & Exponential & Reighley & Logistic & Normal \\[2mm]
    \hline\hline
    Uniform & $\rho(X,Y)$ & $\mp0.33333$ & $\mp0.28867$ & $\mp 0.32352$ & $\mp 0.31831$ & $\mp 0.32573$\\
    & $\rho_t(X,Y)$ & $\mp0.33333$ & $\mp0.33333$ & $\mp0.33333$ & $\mp0.33333$ & $\mp0.33333$ \\ \hline
    Exponential & $\rho(X,Y)$ & $\mp 0.28867$ & $\mp 0.25000$ & $\mp 0.28016$ &  $\mp 0.27566$        &   $\mp 0.28209$     \\
    & $\rho_t(X,Y)$ & $\mp 0.33333$ & $\mp 0.25000$ & $\mp 0.29289$ &  $\mp 0.30396$      &    $\mp 0.31233$   \\
    \hline
    Reighley & $\rho(X,Y)$ & $\mp 0.32352$  & $\mp 0.28016$   & $\mp0.31396$  &  $\mp 0.30892$  & $\mp 0.31613$    \\
             & $\rho_t(X,Y)$ & $\mp 0.33333$  & $\mp 0.29289$  & $\mp0.31396$  &  $\mp 0.31549$  & $\mp 0.32057$
    \\    \hline
    Logistic & $\rho(X,Y)$   &  $\mp 0.31831$ & $\mp 0.27566$ & $\mp 0.30892$ & $\mp 0.30396$ & $\mp 0.31105$\\
             & $\rho_t(X,Y)$ &  $\mp 0.33333$ & $\mp 0.30396$ & $\mp 0.31549$ & $\mp 0.30396$ & $\mp 0.31233$ \\ \hline
    Normal   & $\rho(X,Y)$   &  $\mp 0.32573$ & $\mp 0.28209$ & $\mp 0.31613$ & $\mp 0.31105$ & $\mp 0.31831$\\
           & $\rho_t(X,Y)$   &  $\mp 0.33333$ & $\mp 0.31233$ & $\mp 0.32057$ & $\mp 0.31233$ & $\mp 0.31831$ \\
    \bottomrule
    \end{tabular}
\end{table}
}

In the following, we give some examples in which $\beta_{H}(X,Y)$  in (\ref{index}) are computed for different transformation DFs $H$.
The following choices for $H$ are considered:
\begin{itemize}
  \item   Exponential distribution $H(x)=1-e^{-x},~ x>0$: Cumulative residual entropy based (CRE-Based) correlation.
  \item Logistic distribution, $H(x)= \frac{1}{1+e^{-x}},~x\in \mathbb{R}$: Odds ratio based (OR-Based) correlation.
\item  Pareto distribution, $H(x)=1-\Big(\frac{1}{x}\Big)^{2},~ x>1$: Extended Gini correlation with parameter $\nu=0.5$ ($\mathrm{EGini}_{0.5})$.
 \item   Uniform distribution, $H(x)=x,~~0<x<1$: Gini correlation.
 \item  Power distribution, $H(x)= 1-(1-x)^{\frac{1}{2}}, ~ 0<x<1$: Extended Gini correlation with parameter $\nu=3$ ($\mathrm{EGini}_3$).
\end{itemize}
\begin{Example}
{\rm Table \ref{tablefgm} represents the values of $\beta_{H}(X,Y)$, in FGM family,   for different choices of
transformation DFs $H$ and different DFs  $F$.}
\end{Example}
\begin{table}[!h]
\caption{\small The range of $\beta_{H}(X,Y)$  for different choices of $H$ and $F$ in FGM family.}\label{tablefgm}
\small
\centering
\begin{tabular}{llcccc}
\toprule
\multicolumn{5}{r}{The ranges of correlation coefficients} \\
\cline{3-6}
 Distribution   & $F(x)$ & CRE-Based & OR-Based & EGini$_{0.5}$ & EGini$_{3}$ \\
\midrule\midrule
 Weibull (1,0.5)             &$1-e^{-\sqrt{x}},~ x>0$ &  $\mp 0.18750$ & $\mp 0.26344$ & $\mp 0.08333$ & $\mp 0.42187$\\[2mm]
Exponential (1)                  &$1-e^{-x},~ x>0$    &  $\mp 0.25000$ & $\mp 0.30396$ & $\mp 0.16667$ & $\mp 0.37500$\\[2mm]
Weibull (1,2)              &$1-e^{-x^2},~ x>0$      & $\mp 0.29289$ & $\mp 0.31549$ & $\mp 0.23570$  & $\mp 0.34650$\\[2mm]
Logistic (0,1)  & $(1+e^{-x})^{-1}, ~ x\in \mathbb{R}$ &$\mp 0.30396$ & $\mp 0.30396$ & $\mp 0.24045$  & $\mp 0.33333$\\[2mm]
Extreme value (0,1)    &$~~e^{-e^{-x}}, ~ x\in \mathbb{R}$  &  $\mp 0.27555$ & $\mp 0.30701$ & $\mp 0.19951$ & $\mp 0.35335$\\
Laplace (0,1)  &$\left\{
   \begin{array}{ll}
     \frac{1}{2}e^{x}, & \hbox{$x<0$;} \\
     1-\frac{1}{2}e^{-x}, & \hbox{$x\geq 0$.}
   \end{array}
 \right.
 $ & $\mp 0.29403$ & $\mp 0.29403$ & $\mp 0.21832$ & $\mp 0.33333$\\
\bottomrule
\end{tabular}

\end{table}

\begin{Example}\label{example}\em
In this example we consider two bivariate distributions and compute the correlation index $\beta_{H}(X,Y)$ for different choices of $H$:
\begin{itemize}
\item [(a)]
The  first bivariate distribution which we consider is a special case of Gumbel-Barnett family of copulas,  introduced by \cite{Barnett(1980)}, given as
\begin{align}\label{copula}
C_\theta(u, v) = u + v - 1 + (1-u)(1-v)e^{-\theta \log(1-u)\log(1.v)}, \qquad 0\leq \theta \leq 1.
\end{align}
In this copula if we take the standard exponential DFs as marginals of $X$ and $Y$, then we arrive at  the Gumbel's bivariate exponential DF (Gumbel 1960). The joint DF of Gumbel's bivariate exponential distribution is written as
\begin{align}
F_\theta(x,y)=1-e^{-x}-e^{-y}+e^{-x-y-\theta xy},\qquad x>0,~y>0,~0\leq\theta\leq1. \label{gumbel}
\end{align}
For $\theta = 0$, $X$ and $Y$ are independent and $\rho(X,Y) = 0$.
As $\theta$ increases, the absolute value of Pearson correlation, $|\rho(X,Y)|$, increases and takes value $\rho(X,Y)=-0.40365$ at $\theta =1$.
This distribution is applied for describing r.v.s with negative correlation. (Of course, positive correlation can be obtained by changing $X$ to $-X$ or $Y$ to $-Y$.)  In Table \ref{table-1}, the range of Pearson correlation and the range of $H$-transformed correlation are given for Gumbel's bivariate exponential distribution.

\item [(b)]
The second bivariate distribution considered  in Table \ref{table-1} is bivariate Logistic distribution which is belong to Ali-Mikhail-Haq family of copulas (\cite{Hutchinson and Lai(1990)}) with the following structure
\begin{align}
C_\theta(u,v)=\frac{uv}{1-\theta(1-u)(1-v)}, \qquad -1\leq \theta \leq 1. \label{ali-Mik-copula}
\end{align}
With  standard Logistic distributions as marginal DFs of $X$ and $Y$,  we arrive at the joint DF of bivariate Logistic distribution as follows
\begin{align}\label{LogDF}
  F_\theta(x,y)=\dfrac{1-e^{-x}}{1+e^{-y}-\theta e^{-y-x}},\qquad x>0,~y\in \mathbb{R},~-1\leq \theta\leq1.
\end{align}
Note that Gumbel's bivariate Logistic distribution is a special case of bivariate Logistic distribution when $\theta = 1$.
\end{itemize}

Both bivariate DFs in \eqref{gumbel} and \eqref{LogDF} are exchangeable.  Hence for both cases, we obtain
$\rho(X,Y)=\rho_t(X,Y)$. The range of possible values of $\beta_{H}(X,Y)$ is given on the basis of five different
DFs $H$ introduced above. For each $H$, the values  of lower bound and upper bound of $H$-transformed
correlation for Gumbel's bivariate exponential, which are attained in $\theta=1$ and $\theta=0$, respectively,
are given in the first panel of Table \ref{table-1}.  It is seen from the table that the widest  range of
correlation is achieved for $\mathrm{EGini}_3$  among all other correlations.
It is evident from the table that, the range of the values of  Pearson correlation $\rho(X,Y)$ is even
less than those of  Gini and OR-based correlations. The minimum range of correlation corresponds to
 $\mathrm{EGini}_{0.5}$. In the case that the DF $H$ is equal to the marginal DFs of the bivariate distribution,
 the associated correlation $\beta_{H}(X,Y)$  becomes the Pearson correlation, which in this case is
 the CRE-Based correlation.
The second panel of Table \ref{table-1} gives the correlation $\beta_{H}(X,Y)$, based on the above  mentioned
distributions $H$, in   bivariate Logistic distribution. The lower bound and the upper bound of all correlations
are attained for  $\theta=-1$ and $\theta=1$, respectively.  In this case  the maximum range of correlation is achieved for $\mathrm{EGini}_3$
and the  minimum range is achieved for $\mathrm{EGini}_{0.5}$.
\begin{table}[!h]
\centering
\caption{\small The ranges of $\rho$, and $\beta_H$ correlations for two exchangeable distributions.}\label{table-1}
\small
\begin{tabular}{ lcc }
\toprule
\multicolumn{3}{ l}{{\bf Gumbel's Type I Bivariate  Exponential Distribution}} \\[2mm]
\multicolumn{3}{ l }{$F_\theta(x,y)=1-e^{-x}-e^{-y}+e^{-x-y-\theta xy},~~x>0,~y>0,~0\leq\theta\leq1.$} \\[2mm]
\hline
 Correlation index & Lower bound  & Upper bound  \\ \hline
 Pearson  & $-0.40365$ & $0$ \\
 CRE-Based & $-0.40365$ & $0$\\
 OR-Based & $-0.51267$ & $0$ \\
 $\mathrm{EGini}_{0.5}$ & $-0.26927$ & $0$ \\
 $\mathrm{Gini}$ & $-0.55469$ & $0$ \\
 $\mathrm{EGini}_3$ & $-0.64125$ & $0$ \\
\hline\hline
\multicolumn{3}{l}{{\bf Bivariate Logistic Distribution}} \\[2mm]
\multicolumn{3}{l}{$F_\theta(x,y)=\left(1+e^{-x}+e^{-y}+(1-\theta)e^{-x-y}\right)^{-1},~~~x\in \mathbb{R},~y\in \mathbb{R},~-1\leq \theta\leq1.$} \\[2mm]
\hline
 Correlation index & Lower bound  & Upper bound \\ \hline
 Pearson  & $-0.25000$ & $0.50000$\\
 CRE-Based & $-0.26516$ & $0.39207$\\
 OR-Based & $-0.25000$ & $0.50000$ \\
 EGini$_{0.5}$ & $-0.22135$ & $0.27865$ \\
 Gini & $-0.27259$ & $0.50000$  \\
 EGini$_{3}$ & $-0.26272$ & $0.55556$ \\
\bottomrule
\end{tabular}
\end{table}
\end{Example}

\begin{Example}\label{example-non-exchangeable}\em
  In this example, we  consider again the copulas given in  \eqref{copula} and \eqref{ali-Mik-copula}.
  However, here we assume that the marginal DFs are not the same (the bivariate distribution is not exchangeable).
  In the first  bivariate distribution the marginals  are two different Weibull DFs (with different shape parameters)
   and in the second  case the  marginals are two different power DFs (with different shape parameters), respectively.
    In Table \ref{table-2}, the ranges of possible values of $\rho(X,Y)$, $\rho_t(X,Y)$, and $\beta_{H}(X,Y)$ are
    presented for both bivariate DFs.
The values  of lower and upper bounds of $H$-transformed correlation for the two bivariate distributions which
are attained in $\theta=1$ and $\theta=0$, and in $\theta=-1$ and $\theta=1$, respectively, are numerically computed
for different DFs $H$. In the first panel which corresponds to Gumbel-Barnett copula with Weibull-Weibull marginals,
it is seen that   the maximum range is attained for $\mathrm{EGini}_3$ and the minimum range is achieved for Pearson correlation.
   Also as we showed in inequality (\ref{erho}), the results of the table show that  the $\rho$-transformed correlation     has a wider range than that of Pearson correlation.

 The second panel of the table presents the correlations between $X$ and $Y$ for
 Ali-Mikhail-Haq copula with power-power marginal DFs. %
   In this case, we see that the maximum range coincides with $\mathrm{EGini}_3$, the next maximum ranges are related
   to  OR-Based, and Gini correlations, respectively,  and the minimum range is obtained in $\mathrm{EGini}_{0.5}$.
   { Also we see that $\rho_{t}(X,Y)$  indicates  a wider range of correlation between $X$ and $Y$ comparing to Pearson  correlation $\rho(X,Y)$.}

\begin{table}[!h]
\centering
\caption{\small The ranges of $\rho$, $\rho_t$, and $\beta_H$ correlations for  two distributions with non-equal marginals.}\label{table-2}
\small
\begin{tabular}{ llcc }
\toprule
\multicolumn{3}{ l}{{\bf Gumbel-Barnett copula with Weibull-Weibull  marginals}} \\[2mm]
\multicolumn{3}{ l }{$F_\theta(x,y)=1-e^{-x^2}-e^{-\sqrt{y}}+e^{-x^2-\sqrt{y}-\theta x^2\sqrt{y}},\quad x>0,~y>0,~0\leq\theta\leq1.$} \\[2mm]
\hline
Correlation index & Lower bound  & Upper bound  \\ \hline
   Pearson & $-0.32420$ & $0$\\
   $\rho$-transformed   &    $-0.43307$ & $0$\\
  CRE-Based & $-0.48426$ & $0$ \\
  OR-Based & $-0.51759$ & $0$ \\
  EGini$_{0.5}$ & $-0.41563$ & $0$ \\
  Gini &  $-0.53692$ & $0$ \\
  EGini$_{3}$ & $-0.55776$ & $0$ \\
\hline\hline
\multicolumn{3}{ l}{{\bf Ali-Mikhail-Haq copula with power-power marginals}} \\[2mm]
\multicolumn{3}{ l }{
$F_\theta(x,y)=\dfrac{x(2-x)y(y^2-3y+3)}{(1+\theta(y-1)^3(x-1)^2)}, ~~~~0<x<1,~0<y<1,~-1\leq \theta\leq1$}\\[3mm]
\hline
Correlation index & Lower bound  & Upper bound  \\ \hline
 Pearson & $-0.27099$ & $0.39668$\\
 $\rho$-transformed  & $-0.27212$ & $0.39833$\\
 CRE-Based & $-0.26589$ & $0.36447$ \\
 OR-Based &  $-0.27387$ & $0.45685$\\
 EGini$_{0.5}$ & $-0.24790$ & $0.29890$\\
 Gini & $-0.27887$ & $0.45177$  \\
 EGini$_{3}$ & $-0.28324$ & $0.51025$\\
 \bottomrule
\end{tabular}
\end{table}
\end{Example}

\subsection{Some Symmetric Versions }
We have to point out here that   the Pearson's and Spearman's correlation
coefficients  are both  symmetric measures of correlation. However the association measure $\beta_H(X,Y)$ introduced
in this paper is not  generally a  symmetric measure  unless the two r.v.s are exchangeable. There are several ways
that one can introduce a symmetric version  of the correlation coefficient considered in this paper, i.e., to impose
a correlation coefficient with the property $\beta_{H}(X,Y) = \beta_{H}(Y,X)$. Motivated by the works of
\cite{R7, Yitzhaki and Olkin(1991), R2}, in the following,  we introduce three  measures of correlation based
on $\beta_{H}(X,Y)$ which are symmetric in terms of $F$ and $G$.
\begin{description}
 \item [(a)] The first symmetric version of correlation can be considered as
  \begin{equation}
  \tau_{H}(X,Y)=\frac{1}{2}\left(\beta_{H}(X,Y)+\beta_{H}(Y,X)\right).
  \end{equation}

  \item[(b)] The second symmetric version which can be constructed is based on the approach used by \cite{R2}. Let
      $\eta_X=Cov(X,H^{-1}F(X))$ and $\eta_Y=Cov(Y,H^{-1}G(Y))$.  Define  $\nu_H(X,Y)$ as follows
$$\nu_{H}(X,Y)=\frac{\eta_X\beta_{H}(X,Y)+\eta_Y\beta_{H}(Y,X)}{\eta_X+\eta_Y}.$$
Then $\nu_{H}(X,Y)$, as a  weighted function of $\beta_H(X,Y)$ and $\beta_H(Y,X)$,  is a symmetric measure of correlation that lies between $[-1,1]$ and  have  the requirements of a correlation coefficient described in  Theorem \ref{th1q}.
\item [(c)] The third symmetric index which can be imposed based on $\beta_{H}(X,Y)$ is as follows (see, \cite{R7}).
With  $\eta_X$, and $\eta_Y$, as defined in (b), let  ${\bar \beta}_H(X,Y)=1-\beta_{H}(X,Y)$ and
${\bar \beta}_H(Y,X)=1-\beta_{H}(Y,X).$  Consider  ${\bar \nu}_H(X,Y)$ as
\begin{align*}
  {\bar \nu}_{H}(X,Y)=&\frac{\eta_X{\bar \beta}_H(X,Y)+\eta_Y{\bar \beta}_H(Y,X)}{\eta_X+\eta_Y}\\
  =& 1-{\nu}_{H}(X,Y).
\end{align*}
Then ${\bar \nu}_{H}(X,Y)$ which is a  weighted function of ${\bar \beta}_H(X,Y)$ and ${\bar \beta}_H(Y,X)$
is symmetric in $F$ and $G$ and ranges between $[0,2]$. \cite{R7} showed that  ${\bar \nu}_{H}(X,Y)$, in the
case that $H$ is uniform distribution  gives a measure,  called Gini index of mobility,
that provides  a consistent  setting for analysis of mobility, inequality and horizontal equity.  It can
be easily shown  that ${\bar \nu}_{H}(X,Y)$  can be also presented  as
$${\bar \nu_{H}}(X,Y)=\frac{Cov\left(X-Y,H^{-1}F(X)-H^{-1}G(Y)\right)}{\eta_X+\eta_Y}.$$
\end{description}
In the following, we give an example that these symmetric measures are calculated.
\begin{Example}\label{examp-sym}
\em
Consider the Gumbel-Barnett copula with two different Weibull distributions as marginals and the joint DF given in
Example \ref{example-non-exchangeable}. Let $\theta=1$ which corresponds to highest  dependency between r.v.s $X$,
and $Y$. Then the joint DF of $X$ and $Y$ is written as
\[
F(x,y)=1-e^{-x^2}-e^{-\sqrt{y}}+e^{-x^2-\sqrt{y}- x^2\sqrt{y}},\qquad x>0,~y>0.
\]
Table \ref{table-sym} presents   the values of correlations $\beta_H(X,Y)$ and $\beta_H(Y,X)$,  symmetric
correlations  $\tau_H(X,Y)$, $\nu_H(X,Y)$ and ${\bar \nu}_H(X,Y)$ for different distributions  $H$.

\begin{table}[!h]
\centering
\caption{\small The values of symmetric correlation coefficients for Example \ref{examp-sym}.}\label{table-sym}
\small
\begin{tabular}{ llcccc }
\toprule
Index & $\beta_H(X,Y)$  & $\beta_H(Y,X)$  &  $\tau_H(X,Y)$ & $\nu_H(X,Y)$ & ${\bar \nu}_H(X,Y)$ \\
\midrule
    CRE-Based & $-0.48426$ & $-0.29817$  & $-0.39121$ & $-0.31673$ & $1.31673$\\
  OR-Based & $-0.51759$ & $-0.47762$  & $-0.49761$ & $-0.48267$ & $1.48267$\\
  EGini$_{0.5}$ & $-0.41563$ & $-0.12179$ & $-0.26871$ & $-0.13873$ & $1.13873$\\
  Gini &  $-0.53692$ & $-0.59375$  & $-0.56534$ & $-0.58537$ & $1.58537$\\
  EGini$_{3}$ & $-0.55776$ & $-0.80720$ & $-0.68248$ & $-0.76379$ & $1.76379$\\
\bottomrule
\end{tabular}
\end{table}
\end{Example}

\section{A Decomposition Formula}
 In this section we give a decomposition formula  for  ${\cal C}(T,Y)$, which provides some results on the connection
 between the variability of sum of a number of r.v.s in terms of sum of variabilities of each r.v.
 In a reliability  engineering point of view, consider a system with standby components with the following structure.
 We assume that the  system is  built of $n$  units with lifetimes $X_1,\dots,X_n$  which will be  connected to
 each other sequentially as follows. Unit number 1 with lifetime $X_1$ starts operating and in the time of  failure,
 the unit number 2 with lifetime $X_2$ starts working  automatically,  and so on until the $n$th unit, with lifetime
 $X_n$, fails. Hence,  the lifetime of the system, denoted by $T$,  would be $T=\sum_{i=1}^{n}X_i$.
Assume that $\mu_i=E(X_i)$ denotes the mean time to failure of unit number $i$ and $\mu=E(T)=\sum_{i=1}^{k}\mu_i$
denotes the mean time to failure of the system.

Let again for any two r.v.s $X$ and $Y$ with DFs $F$ and $G$, respectively, we denote
$\mathrm{Cov}\left(X,G^{-1}F(X)\right)$ by ${\cal C}(X,Y)$. Now we have the following result.
\begin{Theorem}\label{th-decomps}
For any r.v. $Y$ with DF $G$, we have the following decomposition for  ${\cal C}(T,Y)$ in terms of ${\cal C}(X_i,Y)$,
$i=1,2,\dots,n$.
\begin{eqnarray*}
{\cal C}(T,Y)=\sum_{i=1}^{n} {\beta}_G(X_i,T){\cal C}(X_i,Y),
\end{eqnarray*}
where $ {\beta}_G(X_i,T)$ is the $G$-transformed correlation between the system lifetime $T$ and component lifetime
$X_i$ defined in (\ref{index}).
\end{Theorem}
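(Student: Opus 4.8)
The plan is to reduce everything to additivity of covariance in its first slot, after which the claimed identity becomes a bookkeeping exercise in the definitions. First I would fix notation: let $F_T$ denote the DF of the system lifetime $T=\sum_{i=1}^n X_i$ and let $F_i$ denote the DF of the component lifetime $X_i$. By the definition of the $G$-covariance, ${\cal C}(T,Y)=\mathrm{Cov}\big(T,G^{-1}F_T(T)\big)$. The crucial observation is that the transformation $\psi(\cdot):=G^{-1}F_T(\cdot)$ is a single deterministic (nondecreasing) map that does not depend on the index $i$; hence $\psi(T)$ is one fixed random variable appearing in every term below.

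Next I would expand the first argument. Writing $T=\sum_{i=1}^n X_i$ and using additivity of covariance in its first entry gives
\begin{equation*}
{\cal C}(T,Y)=\mathrm{Cov}\Big(\sum_{i=1}^n X_i,\,G^{-1}F_T(T)\Big)=\sum_{i=1}^n \mathrm{Cov}\big(X_i,\,G^{-1}F_T(T)\big).
\end{equation*}
I would stress that this step uses no assumption on the dependence structure of the $X_i$'s: since $\psi(T)$ is the same random variable in each summand, only linearity in the first slot is invoked, so the decomposition holds for the standby system regardless of whether the component lifetimes are independent.

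Finally I would identify each summand with the corresponding product $\beta_G(X_i,T)\,{\cal C}(X_i,Y)$. By Definition (\ref{index}) applied with transformation DF $H=G$, first variable $X_i$ (with DF $F_i$), and second variable $T$ (with DF $F_T$), one has
\begin{equation*}
\beta_G(X_i,T)=\frac{\mathrm{Cov}\big(X_i,G^{-1}F_T(T)\big)}{\mathrm{Cov}\big(X_i,G^{-1}F_i(X_i)\big)},\qquad {\cal C}(X_i,Y)=\mathrm{Cov}\big(X_i,G^{-1}F_i(X_i)\big).
\end{equation*}
The denominator of $\beta_G(X_i,T)$ is exactly ${\cal C}(X_i,Y)$, so it cancels in the product and leaves $\beta_G(X_i,T)\,{\cal C}(X_i,Y)=\mathrm{Cov}\big(X_i,G^{-1}F_T(T)\big)$. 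Substituting this into the sum yields ${\cal C}(T,Y)=\sum_{i=1}^n \beta_G(X_i,T)\,{\cal C}(X_i,Y)$, which is the claim.

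There is no genuine analytic obstacle here; the only points needing care are notational. I would keep the transformation DF $G$ in $\beta_G$ distinct from the DF $F_T$ of the second argument $T$ (the symbol $G$ plays the role of $H$ in (\ref{index})), and I would record the harmless standing hypothesis ${\cal C}(X_i,Y)=\mathrm{Cov}\big(X_i,G^{-1}F_i(X_i)\big)>0$ needed for each $\beta_G(X_i,T)$ to be well defined, together with the finiteness of the relevant expectations assumed throughout the paper.
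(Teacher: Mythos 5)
Your proposal is correct and follows essentially the same route as the paper's own proof: expand $T=\sum_{i=1}^n X_i$ in the first slot of $\mathrm{Cov}\big(T,G^{-1}F_T(T)\big)$ by linearity, then multiply and divide each summand by $\mathrm{Cov}\big(X_i,G^{-1}F_{X_i}(X_i)\big)={\cal C}(X_i,Y)$ to recognize $\beta_G(X_i,T)$. Your explicit remarks that no independence of the $X_i$'s is needed and that ${\cal C}(X_i,Y)>0$ must hold for each $\beta_G(X_i,T)$ to be well defined are sound refinements consistent with the paper's standing assumptions.
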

\begin{proof} Let $F_{X_i}$ and $F_T$ denote the DFs of component lifetime $X_i$ and the system lifetime $T$,
respectively. From the covariance properties of sum of r.v.s, we can  write
      \begin{eqnarray*}
      {\cal C}(T,Y)&=& Cov(T,G^{-1}F_T(T))\\
        &=&\sum_{i=1}^{n}Cov(X_i,G^{-1}F_T(T))\\
        &=&\sum_{i=1}^{n}\frac{Cov(X_i,G^{-1}F_T(T))}{Cov(X_i,G^{-1}F_{X_i}(X_i))}Cov(X_i,G^{-1}F_{X_i}(X_i))\\
        &=&\sum_{i=1}^{n}\beta_G(X_i,T){\cal C}(X_i,Y).
      \end{eqnarray*}
\end{proof}

\begin{Corollary}\label{cor1}
\em It is interesting to note that the correlation between the system lifetime $T$ and its component lifetime $X_i$, i.e., $\beta_G(X_i,T)$ is always nonnegative. This is so because in $ \beta_G(X_i,T)$, $G^{-1}F_T(T)$ is  trivially an increasing function of $X_i$, as $T$ is increasing function of $X_i$. Hence, $\mathrm{Cov}(X_i,G^{-1}F_T(T))$ is nonnegative which, in turn, implies  that  $ \beta_G(X_i,T)$ is nonnegative.
Thus,  we have
\begin{align}\label{inequal-alpha}
  0\leq \beta_G(X_i,T)\leq 1.
\end{align}
This result shows that the G-covariance between the system lifetime $T$ and  r.v. $Y$ can be decomposed as a combination of the G-covariance between components lifetime and r.v. $Y$.
 From Theorem \ref{th-decomps} and relation \eqref{inequal-alpha}, we conclude that
\begin{eqnarray}\label{inequality}
      {\cal C}(T,Y)\leq \sum_{i=1}^{n} {\cal C}(X_i,Y).
\end{eqnarray}
That is,  the G-covariance between the system lifetime and r.v. $Y$ is less than the sum of G-covariance between its components and r.v. $Y$.  In particular when the $X_i$'s are identical r.v.s, we have $ {\cal C}(T,Y)\leq n  {\cal C}(X_1,Y)$. In this situation, if we assume that $G=F_{X_1}$ then $${\cal C}(T,X_i)\leq n  \mathrm{Var}(X_1), \qquad i=1,\dots,n.$$
\end{Corollary}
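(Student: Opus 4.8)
The plan is to first pin down the two bounds $0\le\beta_G(X_i,T)\le 1$ recorded in \eqref{inequal-alpha}, and then obtain the inequality \eqref{inequality} by feeding them into the decomposition of Theorem \ref{th-decomps}. Writing $\beta_G(X_i,T)=\mathrm{Cov}(X_i,G^{-1}F_T(T))/\mathrm{Cov}(X_i,G^{-1}F_{X_i}(X_i))$, I would observe that the denominator is exactly the $G$-covariance ${\cal C}(X_i,Y)$, which the remark following the Definition shows to be strictly positive whenever $X_i$ is nondegenerate; thus the sign of $\beta_G(X_i,T)$ is governed entirely by its numerator. The upper bound is immediate: since $G$ is a continuous DF and $(X_i,T)$ is a pair of continuous r.v.s, the first assertion of Theorem \ref{th1q}(a) gives $\beta_G(X_i,T)\le 1$ with no further hypothesis.

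The one genuinely delicate point --- and the step I expect to be the main obstacle --- is the nonnegativity of the numerator $\mathrm{Cov}(X_i,G^{-1}F_T(T))$. Here $t\mapsto G^{-1}F_T(t)$ is nondecreasing, being a composition of the nondecreasing $F_T$ with the nondecreasing quantile $G^{-1}$, and $T=\sum_{j}X_j$ is nondecreasing in $X_i$, so $G^{-1}F_T(T)$ is a nondecreasing function of $X_i$. The slogan \lq\lq a variable is positively correlated with an increasing function of itself\rq\rq\ makes this plausible, but to make it rigorous one must account for the remaining components. The cleanest route I would take is to assume the standby lifetimes $X_1,\dots,X_n$ are independent --- the standard assumption for sequential-standby structures --- so that they are associated; since both $X_i$ and $G^{-1}F_T(T)$ are nondecreasing functions of the vector $(X_1,\dots,X_n)$, association yields $\mathrm{Cov}(X_i,G^{-1}F_T(T))\ge 0$. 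Equivalently, one may condition on $(X_j)_{j\ne i}$, apply the one-dimensional fact that $\mathrm{Cov}(X_i,h(X_i))\ge 0$ for nondecreasing $h$, and note that the residual conditional-mean term is nonnegative by the same monotonicity. Together with the denominator's positivity this gives $\beta_G(X_i,T)\ge 0$, completing \eqref{inequal-alpha}.

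Finally I would combine the bounds with Theorem \ref{th-decomps}. Writing ${\cal C}(T,Y)=\sum_{i=1}^n\beta_G(X_i,T){\cal C}(X_i,Y)$ and using $\beta_G(X_i,T)\le 1$ together with ${\cal C}(X_i,Y)\ge 0$, a termwise comparison gives $\beta_G(X_i,T){\cal C}(X_i,Y)\le{\cal C}(X_i,Y)$ and hence \eqref{inequality} after summation. The identical-component statement follows at once, since then ${\cal C}(X_i,Y)={\cal C}(X_1,Y)$ for every $i$; and the last special case uses the observation made in the Motivations for the case $G=F$, namely that taking $G=F_{X_1}$ makes $G^{-1}F_{X_1}$ the identity map, so that ${\cal C}(X_1,Y)=\mathrm{Var}(X_1)$ and therefore ${\cal C}(T,X_i)\le n\,\mathrm{Var}(X_1)$.
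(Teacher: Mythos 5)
Your proposal is correct and follows essentially the same route as the paper: it bounds $\beta_G(X_i,T)$ between $0$ and $1$ (the upper bound via Theorem \ref{th1q}(a), the lower via nonnegativity of $\mathrm{Cov}\big(X_i,G^{-1}F_T(T)\big)$) and then feeds these bounds termwise into the decomposition of Theorem \ref{th-decomps}, with the identical-component and $G=F_{X_1}$ specializations handled exactly as in the text. If anything you are more careful than the paper, which asserts that $G^{-1}F_T(T)$ is \lq\lq trivially\rq\rq\ an increasing function of $X_i$ even though $T$ depends on all $n$ lifetimes; your appeal to independence of the standby lifetimes (hence association, so that two nondecreasing functions of the vector are nonnegatively correlated), or equivalently the conditioning argument, supplies precisely the justification the paper leaves implicit, and some such positive-dependence hypothesis is genuinely needed, since for strongly negatively dependent components $\mathrm{Cov}\big(X_i,G^{-1}F_T(T)\big)$ can be negative.
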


Based on Corollary  \ref{cor1}, the following  inequalities are obtained for some well known measures of disparity as special cases:
\begin{itemize}
\item [(a)] If $G=F_T$, then we get
  $$\mathrm{Var}(T)\leq \sum_{i=1}^n {\cal C}(X_i,T).$$
\item[(b)] In the case that $G(x)=1-e^{-\sqrt[k]{x}}, \ x>0, \ k>0,$ the Weibull distribution with shape parameter $1/k$, we obtain
  $${\cal E}_k(T)\leq \sum_{i=1}^n {\cal E}_k(X_i), \qquad k=1,2,\dots,$$
  where the ${\cal E}_k(\cdot)$ is the GCRE defined in (\ref{psar}).  In  the special case where $k=1$, we obtain the following inequality regarding CRE.
  $${\cal E}_1(T)\leq \sum_{i=1}^n {\cal E}_1(X_i).$$
Thus, it is concluded that the uncertainty of  a stand by system lifetime, in the sense of  CRE, is less than the sum of uncertainties  of the   its components lifetime.  As a result we can also conclude equivalently that for the system described above
$$E(m_{T}(T))\leq \sum_{i=1}^{n} E(m_{X_i}(X_i)),$$
where $m_T$ and $m_{X_i}$ are the MRL's of the system and the components, respectively; see also, \cite{nasr}.
\item[(c)] Consider $G(\cdot)$ as the DF given in \eqref{exten-gini-dis}. Then, for $\nu>0$,
\begin{align*}
\mathrm{EGini}_\nu(T)\leq \sum_{i=1}^n \mathrm{EGini}_\nu(X_i).
\end{align*}
 For $\nu=2$, which corresponds to $G(x)$ as uniform distribution on $(0,1)$,  we get
  \begin{align}\label{inequality-Gini}
    \mathrm{GMD}(T)\leq \sum_{i=1}^n \mathrm{GMD}(X_i),
  \end{align}
  where $\mathrm{GMD}(\cdot)$ is the Gini's mean difference. This result was already obtained by \cite{R3}.
\end{itemize}
\section{Concluding Remarks}
In the present article, we introduced a unified  approach to construct  a  correlation coefficient between two
continuous r.v.s. We assumed that the continuous r.v.s  $X$ and $Y$ have a joint distribution function $F(x,y)$
with marginal distribution functions  $F$ and $G$, respectively.  We first considered the  covariance between $X$
and transformation $G^{-1}F(X)$, i.e., $\mathrm{Cov}(X,G^{-1}F(X))$. The function $G^{-1}F(.)$ is known in the
literature as the {\it Q-transformation} (or {\it sample transmutation maps}). We showed that  some well known
measures of variability such as variance, Gini mean difference and its extended version, cumulative residual entropy and
some other  disparity  measures can be considered as special cases of $\mathrm{Cov}(X, G^{-1}F(X))$. Motivated by this,
we proposed a unified measure of correlation between the r.v.s $X$ and $Y$  based on $\mathrm{Cov}(X,H^{-1}G(Y))$, where $H$ is a continuous distribution function.
We showed that the introduced measure, which subsumes some well known measures of associations such as Gini and Pearson correlations for special choices of $H$,  has all  requirements of a correlation index {under some mild condition on DF $H$}. For example it was shown that
it lies between $[-1, 1]$. When the joint distribution of $X$ nd $Y$ is bivariate normal, we showed that the proposed measure, for any choice of $H$, equals the Pearson correlation coefficient.  We proved, under some conditions that for our unified association index,    the  lower and upper bounds of the interval  $[-1, 1]$  are attainable by  joint Fr$\acute{\rm e}$chet bivariate minimal  and maximal distribution functions, respectively. A special case of the introduced correlation in this paper, provided  a variant of Pearson correlation coefficient $\rho(X,Y)$, which measures  with the property that its absolute value is always greater than or equal to the  absolute value of $\rho(X,Y)$.   Since  the proposed measure of correlation is asymmetric, some symmetric versions of that were also discussed. Several examples of bivariate DFs of $X$ and $Y$ were presented in which the correlation is computed for different choices of $H$. Finally, we  presented  a decomposition     formula for $\mathrm{Cov}(X,G^{-1}F(X))$ in which the r.v. $X$ was considered as  the sum of $n$  r.v.s. As an application of the decomposition formula,  some  results were  provided  on the  connection between variability   measures of a standby system in terms of the  variability measures of its components.

The r.v.s that we considered in this article, were assumed to be continuous. One interesting problem which can be
considered as a future study is to investigate the results for the case that  the r.v.s are arbitrary
(in particular discrete r.v.s). Another important problem which can be investigated is to propose some estimators
for $\beta_{H}(X,Y)$ for different choices of $H$.
In particular, we believe that providing estimators for $\rho_{t}(X,Y)$ and exploring their  properties may
be of special importance,  for measuring the linear correlation between the real data collected in different
disciplines and  applications.

\end{document}